\documentclass[a4paper,reqno]{amsart} 
\usepackage{amsfonts}
\usepackage{amssymb}
\usepackage{amsthm}
\usepackage{amsmath}
\usepackage{mathrsfs}
\usepackage{dsfont}
\usepackage{stmaryrd} 
\usepackage[english]{babel}
\textwidth14.1cm
\textheight22cm
\flushbottom

\usepackage[left=3cm,right=3cm,top=3.5cm,bottom=3cm,headsep=0.7cm]{geometry}

\usepackage{pgf,tikz}
\usetikzlibrary{arrows}
\usetikzlibrary{intersections}

\usepackage[scriptsize,bf]{caption}
\usepackage{floatrow}
\usepackage{enumerate}
\usepackage{enumitem}
\usepackage{color}
\usepackage{hyperref}
\usepackage[normalem]{ulem}


\theoremstyle{plain}
\newtheorem{theorem}{Theorem}[section]
\newtheorem{corollary}[theorem]{Corollary}
\newtheorem{lemma}[theorem]{Lemma}
\newtheorem{proposition}[theorem]{Proposition}
\theoremstyle{definition}

\mathchardef\emptyset="001F
\numberwithin{equation}{section}

\newcommand{\N}{\mathbb N}

\newcommand{\e}{\varepsilon}

\newcommand{\R}{\mathbb R}
\newcommand{\Z}{\mathbb Z}

\newcommand{\C}{\mathbb C}

\renewcommand{\S}{\mathcal S}

\newcommand{\step}[1]{\noindent \textit{Step} #1.}

\newcommand{\mres}{\mathbin{\vrule height 1.6ex depth 0pt width 0.13ex\vrule height 0.13ex depth 0pt width 1.3ex}}
\newcommand{\wstar}{\stackrel{*}\rightharpoonup}

\newcommand{\geo}{\mathrm{d}_{\mathbb{S}^1}}

%
\newcommand{\D}{\mathrm{D}}
\renewcommand{\d}{\, \mathrm{d}}
\newcommand{\RR}{\mathbb{R}}
\renewcommand{\SS}{\mathbb{S}}
\newcommand{\sm}{\setminus}

\renewcommand{\H}{\mathcal{H}}
\newcommand{\x}{{\times}}
\newcommand{\ol}{\overline}
\newcommand{\ZZ}{\mathbb{Z}}
\newcommand{\de}{\partial}
\newcommand{\NN}{\mathbb{N}}

\title[Coarse graining and large-$N$ behavior of the $d$-dimensional $N$-clock model]{Coarse graining and large-$N$ behavior \\
of the $d$-dimensional $N$-clock model}
\author[M. Cicalese]{Marco Cicalese}
\author[G. Orlando]{Gianluca Orlando}
\author[M. Ruf]{Matthias Ruf}
\address[Marco Cicalese]{Technische Unversit\"at M\"unchen, Zentrum Mathematik, Munich, Germany}
\email{cicalese@ma.tum.de}
\address[Gianluca Orlando]{Technische Unversit\"at M\"unchen, Zentrum Mathematik, Munich, Germany}
\email{orlando@ma.tum.de}
\address[Matthias Ruf]{Ecole polytechnique f\'ed\'erale de Lausanne, SB MATH MATH-GE, Lausanne, Switzerland}
\email{matthias.ruf@epfl.ch}

\begin{document}
\selectlanguage{english}

\begin{abstract}
	We study the asymptotic behavior of the $N$-clock model, a nearest neighbors ferromagnetic spin model on the $d$-dimensional cubic $\varepsilon$-lattice in which the spin field is constrained to take values in a discretization $\mathcal{S}_N$ of the unit circle~$\mathbb{S}^{1}$ consisting of $N$ equispaced points. Our $\Gamma$-convergence analysis consists of two steps: we first fix $N$ and let the lattice spacing $\varepsilon \to 0$, obtaining an interface energy in the continuum defined on piecewise constant spin fields with values in $\mathcal{S}_N$; at a second stage, we let $N \to +\infty$. The final result of this two-step limit process is an anisotropic total variation of $\mathbb{S}^1$-valued vector fields of bounded variation.
\end{abstract}

\maketitle
 
\noindent {\bf Keywords}: $\Gamma$-convergence, $XY$ model, $N$-clock model, vector fields of bounded variation with values in the unit circle

\vspace{1em}

\noindent {\bf MSC 2010}: 49J45, 26B30, 82B20.

\setcounter{tocdepth}{1}
\tableofcontents

\section{Introduction}

In this paper we are interested in the variational analysis of the $N$-clock model (also known as planar Potts model or $\ZZ_N$-model) in the $d$-dimensional setting. The $N$-clock model is a nearest neighbors ferromagnetic spin model on the cubic lattice in which the spin field is constrained to take values in a set of $N$ equispaced points of the unit circle~$\SS^{1}$. It plays a fundamental role in understanding phase transition phenomena in the theory of classical ferromagnetic spin fields, as it is closely related to the $XY$ (planar rotator) model, for which the spin field is allowed to attain all the values of $\SS^{1}$. In fact, the $N$-clock model is considered as an approximation of the $XY$ model, as for $N$ large enough it predicts Berezinskii-Kosterlitz-Thouless transitions~\cite{FroSpe}, i.e., phase transitions mediated by the formation and interaction of topological singularities, the so-called vortices~\cite{Ber, Kos, KosTho}.

With the aim of describing the relation between the $N$-clock model and the~$XY$ model, probabilistic methods have been used in \cite{vEKO, KO}, while a variational analysis at zero temperature has been only very recently carried out in~\cite{COR1, COR2}. There the authors study the effective behavior of (suitably rescaled versions of) the energy of the~\mbox{$N$-clock} model on the  2-dimensional square lattice $\e \ZZ^2$, examining the case when the number $N = N_\e$ of equi-spaced points on $\SS^{1}$ depends on $\e$ and diverges as $\e \to 0$. The coarse grained model, which describes the microscopic/mesoscopic geometry of the spin field, is strongly affected by the rate of divergence of $N_\e \to +\infty$ as $\e \to 0$. 

In this paper we advance the variational analysis of the $N$-clock model by considering the model on a $d$-dimensional cubic lattice $\e \ZZ^d$, with $d \geq 2$, in the case where the number~$N$ is fixed and independent of $\e$. We shall first identify the limit of the $N$-clock model as $\e \to 0$ keeping $N$ fixed and, at a second stage, we will let $N \to +\infty$. In contrast to the energy of the $XY$ model, the energy resulting from this two-step limit process is by nature unfit to describe the concentration of energy around vortex-like singularities, indicating that the dependence of $N$ on $\e$ seems inevitable with the intent to approximate the $XY$ model at zero temperature. To the best of our knowledge, the explicit identification of the limit energies in the $\e \to 0$ and $N\to\infty$ regimes and in any dimension makes the result contained in this paper the first quantitative answer to the question whether the $N$-clock model approximates the $XY$ model at zero temperature. We shall see that the result is rather analogous to the limiting energy of the $N_\e$-clock model in a specific rate of divergence $N_\e \to +\infty$, chosen among those examined in the two-dimensional setting in~\cite{COR2}. To present in detail the results in this paper, we first summarize the analysis of~\cite{COR2}, starting with some notation. 

Given $N \in \NN$, we consider the set of $N$ equispaced points on the unit circle
\begin{equation*}
	\S_N := \{ \exp\big(\iota \tfrac{2\pi}{N} k \big) \ :  \ k = 0, \dots, N-1\} \, ,
\end{equation*}
where $\iota$ is the imaginary unit. Given an open set $\Omega \subset \RR^2$, the energy associated to an admissible spin field $u \colon \e \ZZ^2 \to \S_{N_\e}$ is given by
\begin{equation*}
	E_\e^{N_\e}(u) := \frac{1}{2} \sum_{\langle i,j \rangle \, \text{in} \, \Omega} \e^2 |u(\e i) - u(\e j)|^2,
\end{equation*}
where the sum is taken over ordered pairs of nearest neighbors $\langle i, j \rangle$, i.e., $(i,j) \in \ZZ^2\x\ZZ^2$ such that $|i-j|=1$ and $\e i, \e j \in \Omega$. We recall that a wide range of phenomena has been observed in~\cite{COR1,COR2} when exploring the possible regimes of $N_\e$. Here we outline the one pertaining to the discussion in the present paper, namely $N_\e \ll \frac{1}{\e |\log \e|}$. The relevant scaling of the energy in this regime is $\frac{N_\e}{2 \pi \e} E_\e^{N_\e}$, sequences of spin fields $u_\e$ with equibounded energy accumulate to vector fields in $BV(\Omega;\SS^1)$, and the scaled energy $\frac{N_\e}{2 \pi \e} E_\e^{N_\e}$ approximates an anisotropic total variation for maps in $BV(\Omega;\SS^1)$. 

In the next theorem we state the result in the regime $N_\e \ll \frac{1}{\e |\log \e|}$ rigorously. We denote by $|\, \cdot \,|_1$ the 1-norm on vectors, by $| \, \cdot \, |_{2,1}$ the anisotropic norm on matrices given by the sum of the Euclidean norms of the columns, and by $\geo$ the geodesic distance on $\SS^1$. For the notation concerning functions of bounded variation we refer to Subsection~\ref{sec:constrained problems}.


\begin{theorem}\cite{COR2} \label{thm:old paper}
	Let $\Omega \subset \RR^2$ be a bounded, open set with Lipschitz boundary. Assume that $N_\e \ll \frac{1}{\e |\log \e|}$. Then the following results hold true: 
	\begin{itemize}[leftmargin=*]
		\item[i)] (Compactness) Let $u_\e \colon \Omega \cap \e \ZZ^2 \to \S_{N_\e}$ be such that $\frac{N_\e}{2\pi \e} E^{N_\e}_\e(u_\e) \leq C$. Then there exists a subsequence (not relabeled) and a function $u \in BV(\Omega;\SS^1)$ such that $u_\e \to u$ in $L^1(\Omega;\RR^2)$. 
		\item[ii)] ($\, \Gamma$-liminf inequality) Assume that $u_\e \colon \Omega \cap \e \ZZ^2 \to \S_{N_\e}$ and $u \in BV(\Omega;\SS^1)$ satisfy $u_\e \to u$ in $L^1(\Omega;\RR^2)$. Then 
			\begin{equation*}
				\liminf_{\e \to 0} \frac{N_\e}{2\pi \e } E^{N_\e}_\e(u_\e)  \geq  \int_{\Omega}{|\nabla u|_{2,1} }{\d x} + |\D^{(c)} u|_{2,1}(\Omega) + \int_{\Omega\cap J_u}{\geo(u^-,u^+)|\nu_{u}|_1}{\d \H^1}  \, .
			\end{equation*} 
		\item[iii)] ($\, \Gamma$-limsup inequality) Let $u \in BV(\Omega;\SS^1)$. Then there exists a sequence $u_\e \colon \Omega \cap \e \ZZ^2 \to \S_{N_\e}$ such that $u_\e \to u$ in $L^1(\Omega;\RR^2)$ and
		\begin{equation*}
				\limsup_{\e \to 0} \frac{N_\e}{2\pi \e} E^{N_\e}_\e(u_\e) \leq \int_{\Omega}{|\nabla u|_{2,1} }{\d x} + |\D^{(c)} u|_{2,1}(\Omega) + \int_{\Omega\cap J_u}{\geo(u^-,u^+)|\nu_{u}|_1}{\d \H^1}.
		\end{equation*}  
	\end{itemize}
\end{theorem}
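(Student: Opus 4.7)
The plan hinges on a single elementary observation: on the discrete set $\S_{N_\e}$, any two distinct values are at arc distance at least $2\pi/N_\e$, and Euclidean chord is comparable to geodesic arc, so that $|u-v|^2 \geq c\,\frac{2\pi}{N_\e}\geo(u,v)$ for all $u,v \in \S_{N_\e}$. This immediately gives the fundamental lower bound
\begin{equation*}
 \tfrac{N_\e}{2\pi\e} E_\e^{N_\e}(u_\e) \,\geq\, c \sum_{\langle i,j\rangle \text{ in } \Omega} \e\,\geo\bigl(u_\e(\e i), u_\e(\e j)\bigr),
\end{equation*}
i.e.\ control of a discrete geodesic total variation by the scaled energy.

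\textbf{Compactness.} Since chord distance is bounded above by arc distance on $\SS^1$, the estimate above yields a uniform $BV$ bound on the piecewise-constant interpolation of $u_\e$ viewed as an $\RR^2$-valued map. Classical $BV$ compactness produces an $L^1$-convergent subsequence with limit $u \in BV(\Omega;\RR^2)$; since $|u_\e|\equiv 1$, the limit satisfies $|u|=1$ a.e., so $u \in BV(\Omega;\SS^1)$.

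\textbf{$\Gamma$-liminf.} I would follow the Fonseca--Müller blow-up method: associate to the scaled energies the Radon measures $\mu_\e$ on $\Omega$, extract a weak-$*$ limit $\mu$, and bound its Radon--Nikodym densities with respect to $\Ld^2$, $|\D^{(c)} u|$, and $\H^1\mres J_u$ separately. At a regular Lebesgue point the chord--arc comparison reduces the problem to the known anisotropic total variation bound in $\RR^2$. At a jump point $x_0 \in J_u$ with normal $\nu = \nu_u(x_0)$ and trace values $u^\pm$, I would slice in the two coordinate directions and exploit the elementary one-dimensional estimate that for any chain $v_0,\dots,v_M \in \S_{N_\e}$ joining $v_0\approx u^-$ to $v_M \approx u^+$,
\begin{equation*}
 \sum_{\ell=0}^{M-1} \tfrac{N_\e}{2\pi}|v_{\ell+1}-v_\ell|^2 \,\geq\, \sum_{\ell=0}^{M-1}\geo(v_\ell,v_{\ell+1}) \,\geq\, \geo(u^-,u^+),
\end{equation*}
where the first inequality uses that any single step in $\S_{N_\e}$ carries arc length $\geq 2\pi/N_\e$, and the second is the geodesic triangle inequality. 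Integrating over the two families of coordinate slices crossing the jump hypersurface produces precisely the $|\nu_u|_1$ anisotropy weight, yielding the jump integrand $\geo(u^-,u^+)|\nu_u|_1$. The Cantor term follows by the same slicing together with a 1D liminf argument.

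\textbf{$\Gamma$-limsup and main obstacle.} By a density/relaxation argument it suffices to build recovery sequences for a class dense in energy: piecewise constant $\SS^1$-valued maps with polyhedral jump sets, plus smooth approximations. The crucial construction is at a jump hypersurface of normal $\nu$ with $\geo(u^-,u^+) = k\cdot 2\pi/N_\e$: naively projecting $u$ to $\S_{N_\e}$ on each side concentrates the $k$ angular steps on a single lattice edge and costs an extra factor $k$, so one must \emph{spread} the transition over $k$ parallel lattice hyperplanes, each carrying a single angular step; the cubic-lattice counting of edge crossings then produces the $|\nu|_1$ anisotropy and sums to $\geo(u^-,u^+)|\nu|_1$. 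I expect the main obstacle to lie in the liminf at jump points: a priori the spin field $u_\e$ may take arbitrary detours through $\S_{N_\e}$ inside the blow-up window, and one must show that no such path is cheaper than the geodesic. The slicing estimate above is the key tool, but turning it into a sharp measure-density lower bound requires careful handling of the trace of $u_\e$ on slices and a lower-semicontinuity argument in 1D $BV$ that is stable as $N_\e\to\infty$.
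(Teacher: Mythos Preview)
This theorem is not proved in the present paper: it is quoted from the companion work~\cite{COR2} (note the citation in the theorem statement) solely to motivate the two new results, Theorems~\ref{thm:eps to 0} and~\ref{thm:N to infty}. So there is no ``paper's own proof'' to compare against. That said, your strategy does mirror the techniques the paper deploys for those related theorems: your chord--arc inequality is essentially Lemma~\ref{l.sinuslemma} (indeed the sharp form of your ``$|u-v|^2 \geq c\,\tfrac{2\pi}{N_\e}\geo(u,v)$'' is $4\sin^2(k\theta/2)\geq 4k\sin^2(\theta/2)$, giving $c=4\sin^2(\theta_{N_\e}/2)/\theta_{N_\e}^2\to 1$), and your slicing liminf is the content of Lemma~\ref{l.lsc}.

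There is, however, a genuine gap in your $\Gamma$-limsup outline: you never invoke the hypothesis $N_\e \ll \tfrac{1}{\e|\log\e|}$, and the density class you propose is too small. A generic $u\in BV(\Omega;\SS^1)$ cannot be approximated in energy by globally smooth $\SS^1$-valued maps or by polyhedral piecewise-constant ones alone; the correct dense class, obtained by combining the relaxation result of~\cite{ACL} with Bethuel's theorem~\cite{Bet} (exactly as in the proof of Proposition~\ref{prop.upperbound_cont} here), consists of $W^{1,1}$ maps smooth outside finitely many \emph{points}, i.e.\ maps carrying vortices. Near such a point singularity the phase winds by $2\pi$ on every small circle, and any discretization to $\S_{N_\e}$ combined with your ``spread the transition over $k$ hyperplanes'' construction forces interactions across $\sim N_\e$ level curves at each dyadic scale between $\e$ and $O(1)$; summing the resulting contributions produces a term of order $N_\e\,\e\,|\log\e|$, which must vanish for the recovery sequence to be sharp. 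This is precisely where the rate assumption enters, and without it the limsup argument does not close. Your outline would be correct for targets $u$ admitting a global smooth lifting, but not in general.
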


We are now in a position to present the two main results in this paper. We shall consider $\Omega \subset \RR^d$ a bounded, open set with Lipschitz boundary and the energy defined for admissible spin fields on the $d$-dimensional cubic lattice $u \colon \Omega \cap \e \ZZ^d \to \S_N$ by 
\begin{equation*}
	E_\e^{N}(u) := \frac{1}{2} \sum_{\langle i,j \rangle \, \text{in}\, \Omega} \e^d |u(\e i) - u(\e j)|^2,
\end{equation*}
where the sum is taken over ordered pairs of nearest neighbors $\langle i, j \rangle$, i.e., $(i,j) \in \ZZ^d\x\ZZ^d$ such that $|i-j|=1$ and $\e i, \e j \in \Omega$ (the factor $\frac{1}{2}$ accounts for the fact that each pair is counted twice).  We state the first result concerning the limit of $E_\e^N$ as $\e \to 0$. For $N$ fixed, the physical system is expected to behave like a classical Ising-type system with $N$ phases. (See also~\cite{CafDLL05, AliBraCic06, AliCicSig12, BraPia12, AliCicRuf15, CicSol15, BraCic17, BraKre18, CicForOrl19, BraPia20} for the analysis of spin systems in the surface scaling.) According to the results proven for the Ising system, we expect the limit energy to be finite on functions of bounded variation with values in the finite set $\S_N$. In the next theorem we identify precisely the surface energy concentrated on the interfaces between the phases of the spin field. We denote by $\theta_N := \frac{2\pi}{N}$ the smallest angle between two different vectors in $\S_N$. 

\begin{theorem}[Limit as $\e \to 0$]\label{thm:eps to 0}
	Let $\Omega \subset \RR^d$ be a bounded, open set with Lipschitz boundary. Let $N \geq 2$ and $\theta_N := 2\pi/N$. Then the following results hold true: 
	\begin{itemize}[leftmargin=*]
		\item[i)] (Compactness) Let $u_\e \colon \Omega \cap \e \ZZ^d \to \S_{N}$ be such that $\frac{N}{2\pi \e} E^{N}_\e(u_\e) \leq C$. Then there exists a subsequence (not relabeled) and a function $u \in BV(\Omega;\S_N)$ such that $u_\e \to u$ in $L^1(\Omega;\RR^2)$ as $\e \to 0$. 
		\item[ii)] ($\, \Gamma$-liminf inequality) Assume that $u_\e \colon \Omega \cap \e \ZZ^d \to \S_{N}$ and $u \in BV(\Omega;\S_N)$ satisfy $u_\e \to u$ in $L^1(\Omega;\RR^2)$  as $\e \to 0$. Then 
			\begin{equation*}
				\liminf_{\e \to 0} \frac{N}{2\pi \e } E^{N}_\e(u_\e) \geq \frac{4\sin^2\left(\tfrac{\theta_N}{2}\right)}{\theta_N^2}\int_{\Omega\cap J_u}\geo(u^-,u^+)|\nu_u|_1\,\mathrm{d}\mathcal{H}^{d-1}\, .
			\end{equation*} 
		\item[iii)] ($\, \Gamma$-limsup inequality) Let $u \in BV(\Omega;\S_N)$. Then there exists a sequence $u_\e \colon \Omega \cap \e \ZZ^d \to \S_N$ such that $u_\e \to u$ in $L^1(\Omega;\RR^2)$  as $\e \to 0$ and
		\begin{equation*}
				\limsup_{\e \to 0} \frac{N}{2\pi \e} E^{N}_\e(u_\e) \leq \frac{4 \sin^2\left(\tfrac{\theta_N}{2}\right)}{\theta_N^2}\int_{\Omega\cap J_u}\geo(u^-,u^+)|\nu_u|_1\,\mathrm{d}\mathcal{H}^{d-1}.
		\end{equation*}  
	\end{itemize}
\end{theorem}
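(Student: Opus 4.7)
The plan is to view $u_\e$ as a discrete Caccioppoli partition of $\Omega$ into the $N$ phases $A^k_\e := \{\e i \in \Omega : u_\e(\e i) = z_k\}$ with $z_k := \exp(\iota k\theta_N)$, and to treat the three items in the classical multi-phase Ising-type framework. For compactness, the starting observation is that $|z - z'|^2 \geq 4\sin^2(\theta_N/2) > 0$ for distinct $z, z' \in \S_N$. Consequently the scaled energy $\frac{N}{2\pi\e}E^N_\e(u_\e)$ controls, up to a constant depending only on $N$, the discrete perimeter $\sum_{\langle i,j\rangle}\e^{d-1}\mathbf{1}_{\{u_\e(\e i)\neq u_\e(\e j)\}}$. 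Extending $u_\e$ to a piecewise constant function on the Voronoi cells of $\e\ZZ^d$ yields a uniform $BV$-bound for each $\mathbf{1}_{A^k_\e}$; standard compactness for discrete partitions (as in the Alicandro--Cicalese--Ruf circle of works) then gives sets of finite perimeter $A_1,\dots,A_N$ with $\mathbf{1}_{A^k_\e} \to \mathbf{1}_{A_k}$ in $L^1(\Omega)$ along a subsequence, and $u := \sum_k z_k \mathbf{1}_{A_k} \in BV(\Omega;\S_N)$ is the desired limit.

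For the $\Gamma$-liminf I would use one-dimensional slicing. Splitting the sum over bonds into the $d$ coordinate directions and invoking Fubini, the problem reduces to showing that for $\mathcal{H}^{d-1}$-a.e.\ line parallel to $e_j$ the 1D energies along it lie asymptotically above $\frac{4\sin^2(\theta_N/2)}{\theta_N^2}$ times the geodesic total variation of the corresponding 1D slice of $u$. The crucial pointwise estimate is
\begin{equation*}
|z - z'|^2 \,\geq\, \frac{4\sin^2(\theta_N/2)}{\theta_N}\, \geo(z, z') \qquad \text{for all } z, z' \in \S_N,
\end{equation*}
which, writing $\geo(z, z') = m\theta_N$ with $m \in \{1,\dots,\lfloor N/2 \rfloor\}$, amounts to the elementary inequality $\sin^2(m\alpha) \geq m \sin^2(\alpha)$ on $\alpha = \theta_N/2 \in (0,\pi/(2m)]$; this can be obtained, for instance, by induction on $m$ via the identity $\sin^2(m\alpha) - \sin^2((m-1)\alpha) = \sin((2m-1)\alpha)\sin(\alpha)$. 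Summing over bonds that straddle different phases on each 1D slice yields the 1D liminf, and integrating over perpendicular slices for each of the $d$ directions produces the anisotropic factor $\sum_j|\nu_u \cdot e_j| = |\nu_u|_1$.

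For the $\Gamma$-limsup I would first approximate an arbitrary $u \in BV(\Omega;\S_N)$ by polyhedral $\S_N$-valued partitions, via a standard density argument for $BV$-partitions. On each flat piece $F \subset J_u$ with unit normal $\nu$ and traces $z^\pm$ at geodesic distance $m\theta_N$, since the pointwise inequality above is strict for $m \geq 2$, a single sharp jump from $z^-$ to $z^+$ would overshoot the desired upper bound; instead, I would build a \emph{staircase} recovery sequence that visits, in a layer of width of order $\e$ across $F$, the $m-1$ intermediate points of $\S_N$ along the geodesic arc through $m$ parallel single-step interfaces. Counting bonds, each single-step interface contributes $\frac{4\sin^2(\theta_N/2)}{\theta_N}|\nu|_1\mathcal{H}^{d-1}(F)$ to the rescaled energy, adding up to precisely the claimed value. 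I expect the main technical obstacle to be the careful gluing of adjacent staircases across the $(d-2)$-dimensional skeleton of the polyhedral partition, and near $\partial\Omega$: while this skeleton has vanishing $\mathcal{H}^{d-1}$-measure and thus contributes no energy in the limit, its neighbourhood must be filled in without double-counting bonds or creating spurious interfaces, as is typical for multi-phase Ising-type recovery constructions.
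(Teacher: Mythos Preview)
Your proposal is correct and shares the same core ingredients with the paper, but the organization differs in instructive ways. The pointwise inequality $|z-z'|^2 \geq \tfrac{4\sin^2(\theta_N/2)}{\theta_N}\geo(z,z')$ is exactly the paper's Lemma~3.1, though the paper proves $\sin^2(k\alpha)\geq k\sin^2\alpha$ via concavity of $\sin(y)-\sqrt{k}\sin(y/k)$ rather than your induction. For the liminf, the paper does not slice the discrete energy directly: instead it observes that the piecewise-constant extension of $u_\e$ lies in $BV(A;\S_N)$, uses the pointwise inequality to bound $\tfrac{N}{2\pi\e}E^N_\e(u_\e)$ from below by the \emph{continuum} functional $\tfrac{4\sin^2(\theta_N/2)}{\theta_N^2}\int_{A\cap J_{u_\e}}\geo(u_\e^-,u_\e^+)|\nu_{u_\e}|_1\,\mathrm d\mathcal H^{d-1}$, and then invokes a separate continuum lower-semicontinuity lemma (proved by slicing). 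Your route slices at the discrete level; both work and the slicing content is the same.

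The more substantial difference is in the limsup. The paper invokes an abstract integral-representation result for discrete finite-phase systems (Braides--Cicalese--Ruf), which gives compactness for free and reduces the upper bound to estimating a single asymptotic cell problem $\varphi(s,r,\nu)$ on the unit cube $Q_\nu$ with prescribed boundary data. The staircase competitor is then built only on $Q_\nu$, and the boundary mismatch is controlled by a crude volume estimate of the form $\#\{z:\e z\in H^{k_s}_{2\e}\cap\{\mathrm{dist}(\cdot,\partial Q_\nu)\leq 3\e\}\}\leq Ck_s\e^{2-d}$. This completely sidesteps the gluing across the $(d-2)$-skeleton that you correctly flag as the main obstacle in your polyhedral approach. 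Your direct construction is perfectly viable (and more self-contained), but the cell-formula route is cleaner here precisely because the abstract machinery absorbs all the localization and gluing.
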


To clarify the expression of the limit functional in Theorem~\ref{thm:eps to 0}, we sketch here the proof of the $\Gamma$-limsup inequality in a very simple setting. Assume that $\Omega$ is the unit cube $Q = (-1/2,1/2)^d$ and~$u$ is the pure-jump function with constant value $u^- = (1,0)$ in $Q^- = (-1/2,1/2)^{d-1} \x (-1/2,0)$ and constant value $u^+ = \exp(\iota k^+ \theta_N)$ in $Q^+ = (-1/2,1/2)^{d-1} \x (0,1/2)$, where $k^+ \in \NN$ is such that $0 \leq k^+ \theta_N \leq \pi$. In this case, the jump set is given by $J_u = (-1/2,1/2)^{d-1} \x \{0\}$. Then~$u_\e$ is constructed by rotating $k^+$ times of an angle $\theta_N$ starting from $u^-$ up to $u^+$ on hyperplanes parallel to the jump set, cf.~Figure~\ref{fig:transition easy case}. More precisely, for $0 \leq k \leq k^+$  we define 
\begin{equation*}
	u_\e(\e i) := \exp(\iota k \theta_N) \quad \text{if} \quad \e i \cdot e_d = k \e 
\end{equation*}
and we put $u_\e(\e i) = (1,0)$ if $\e i \cdot e_d < 0$ and $u_\e(\e i) = \exp(\iota k^+ \theta_N)$ if $\e i \cdot e_d > k^+ \e$, instead. Between two hyperplanes there are $\frac{1}{\e^{d-1}}$ interacting pairs of nearest neighbors. For two such points $\e i, \e j$ we have by a simple geometric argument	$|u_\e(\e i) - u_\e(\e j)| = 2 \sin  (\frac{\theta_N}{2} )$. Summing over all interactions we conclude that 
\begin{equation*}
	\frac{N}{2\pi \e } E_\e^{N}(u_\e) = \frac{1}{2 \theta_N} \sum_{\langle i,j \rangle \, \text{in}\, Q} \e^{d-1} |u_\e(\e i) - u_\e(\e j)|^2 = \frac{1}{\theta_N} \sum_{k=0}^{k^+} 4 \sin^2 \Big(\frac{\theta_N}{2}\Big) = \frac{4\sin^2\big(\tfrac{\theta_N}{2}\big)}{\theta_N^2} k^+ \theta_N  \, .
\end{equation*}
Since $k^+ \theta_N = \geo(u^-,u^+)$, the previous expression reduces to the one in Theorem~\ref{thm:eps to 0} and makes clear the role of $4\sin^2 (\tfrac{\theta_N}{2} )/\theta_N^2$: it is the correcting factor which allows us to pass from the Euclidean distance between vectors to their geodesic distance. The proof of the upper bound is based on the construction in a more general setting  of a recovery sequence which mimics the one presented here in the introduction, cf.~Proposition~\ref{p.Gamma-limsup}. The proof of the lower bound is based on Lemma~\ref{l.sinuslemma}, which shows that the behavior described above is always the most convenient from an energetical point of view.

\begin{figure}[H]
	\includegraphics{./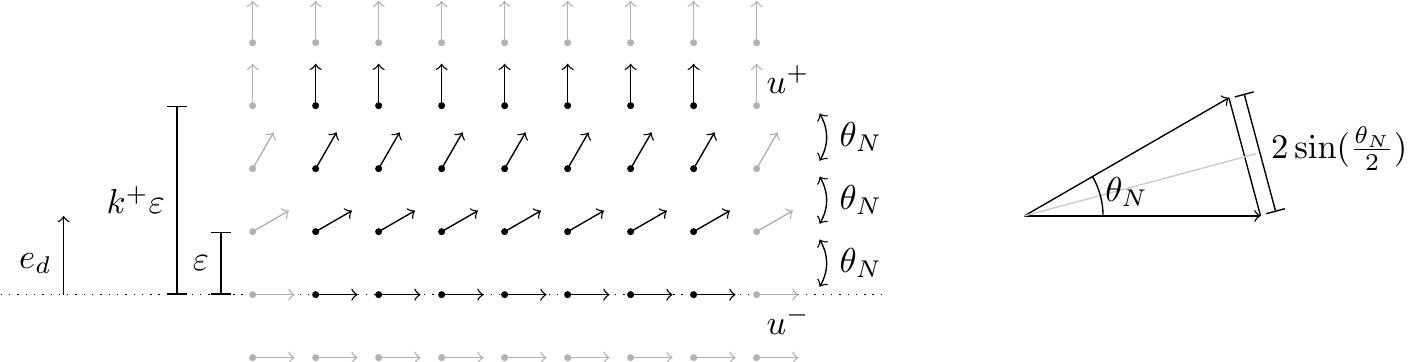}
	\caption{On the left: a recovery sequence in the case of a jump set aligned with the lattice. The spin makes a transition from $u^-$ to $u^+$ jumping with the smallest possible non-zero angle $\theta_N$. On the right: Euclidean distance between two vectors of length 1 with angle $\theta_N$ between them.}
	\label{fig:transition easy case}
\end{figure}

In Section~\ref{sec:constrained problems} we also study the $\Gamma$-convergence of the functionals $E^N_\e$ as $\e \to 0$ under volume constraints on the phases of the spin fields or under Dirichlet boundary conditions.

We are now interested in the limit as $N \to +\infty$ of the energy defined by 
\begin{equation*}
	E_N(u) := \frac{4 \sin^2\left(\tfrac{\theta_N}{2}\right)}{\theta_N^2}\int_{\Omega\cap J_u}\geo(u^-,u^+)|\nu_u|_1\,\mathrm{d}\mathcal{H}^{d-1} , \quad \text{for } u \in BV(\Omega;\S_N) \, ,
\end{equation*}
where $\theta_N := 2\pi/N$, i.e., the energy resulting from the limit process $\e \to 0$ in Theorem~\ref{thm:eps to 0}. Up to the factor $4\frac{\sin^2 \big(\tfrac{\theta_N}{2} \big)}{\theta_N^2}$, which is close to 1 for $N$ large, the energy $E_N$ coincides (for $d=2$) with the limiting energy of Theorem~\ref{thm:old paper} restricted to Caccioppoli partitions taking values in $\S_N$. In the second result of this paper we show that the $\Gamma$-limit of $E_N$ as $N \to +\infty$ agrees with the limiting energy of Theorem~\ref{thm:old paper}. This is rigorously proved in the next theorem, which holds for any dimension $d$.

\begin{theorem}[Limit as $N \to +\infty$]\label{thm:N to infty}
	Let $\Omega \subset \RR^d$ be a bounded, open set with Lipschitz boundary. Then the following results hold: 
	\begin{itemize}[leftmargin=*]
		\item[i)] (Compactness) Let $u_N \colon \Omega \to \S_{N}$ be such that $E_N(u_N) \leq C$. Then there exists a subsequence (not relabeled) and a function $u \in BV(\Omega;\SS^1)$ such that $u_N \to u$ in $L^1(\Omega;\RR^2)$ as $N \to +\infty$. 
		\item[ii)] ($\, \Gamma$-liminf inequality) Assume that $u_N \colon \Omega \to \S_{N}$ and $u \in BV(\Omega;\SS^1)$ satisfy $u_N \to u$ in $L^1(\Omega;\RR^2)$ as $N \to +\infty$. Then 
			\begin{equation*}
				\liminf_{N \to +\infty}  E_N(u_N) \geq \int_{\Omega}{|\nabla u|_{2,1} }{\d x} + |\D^{(c)} u|_{2,1}(\Omega) + \int_{\Omega\cap J_u}{\geo(u^-,u^+)|\nu_{u}|_1}{\d \H^{d-1}}.
			\end{equation*} 
		\item[iii)] ($\, \Gamma$-limsup inequality) Let $u \in BV(\Omega;\SS^1)$. Then there exists a sequence $u_N \colon \Omega \to \S_N$ such that $u_N \to u$ in $L^1(\Omega;\RR^2)$ as $N \to +\infty$ and
		\begin{equation*}
				\limsup_{N \to +\infty} E_N(u_N) \leq \int_{\Omega}{|\nabla u|_{2,1} }{\d x} + |\D^{(c)} u|_{2,1}(\Omega) + \int_{\Omega\cap J_u}{\geo(u^-,u^+)|\nu_{u}|_1}{\d \H^{d-1}}.
		\end{equation*}  
	\end{itemize}
\end{theorem}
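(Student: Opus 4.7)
Set $c_N := 4\sin^2(\theta_N/2)/\theta_N^2 \to 1$ and $\tilde E(u) := \int_{\Omega\cap J_u}\geo(u^-,u^+)|\nu_u|_1\d\H^{d-1}$, so $E_N = c_N\tilde E$ on $BV(\Omega;\S_N)$. Since $|a-b|_2 \leq \geo(a,b)$ for $a,b \in \SS^1$ and $|\nu|_1 \geq 1$, one has $\tilde E(u_N) \geq |Du_N|(\Omega)$ for every $\S_N$-valued Caccioppoli partition $u_N$, so an $E_N$-bound forces a uniform $BV$-bound; standard compactness in $BV(\Omega;\RR^2)$ together with $|u_N| = 1$ a.e.\ then gives (i).

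\textbf{Lower bound via slicing.} The plan is to reduce to a one-dimensional lower semicontinuity statement. For $v\colon I \to \SS^1$ define the geodesic total variation
\begin{equation*}
  \mathrm{TV}_\geo(v;I) := \sup\Big\{\sum_{i=0}^{n-1}\geo(v(t_i),v(t_{i+1})) : t_0 < \ldots < t_n \text{ continuity points of } v\Big\}.
\end{equation*}
Denote by $F$ the candidate $\Gamma$-limit in (ii)--(iii), set $\Pi_k := e_k^\perp$, and write $u^{y,k}(t) := u(y + te_k)$ for the slices. Classical $BV$-slicing, together with the infinitesimal identity between Euclidean and geodesic speeds of $\SS^1$-valued curves, yields the representations
\begin{equation*}
  F(u) = \sum_{k=1}^{d}\int_{\Pi_k}\mathrm{TV}_\geo(u^{y,k};\Omega^{y,k})\d\H^{d-1}(y), \quad \tilde E(u_N) = \sum_{k=1}^{d}\int_{\Pi_k}\mathrm{TV}_\geo(u_N^{y,k};\Omega^{y,k})\d\H^{d-1}(y).
\end{equation*}
Up to a subsequence, $u_N \to u$ in $L^1$ forces $u_N^{y,k} \to u^{y,k}$ pointwise a.e.\ on $\H^{d-1}$-a.e.\ slice; since $\mathrm{TV}_\geo$ is lower semicontinuous under pointwise a.e.\ convergence via its partition-sum definition, Fatou combined with $c_N \to 1$ closes (ii).

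\textbf{Upper bound and main obstacle.} The construction follows the introductory recipe: lift, project onto $\theta_N\Z$, and average over a shift. Assuming a Lipschitz lifting $u = \exp(\iota\phi)$ on a simply connected domain, for $s \in [0,1)$ set $u_N^s(x) := \exp(\iota\theta_N\lfloor\phi(x)/\theta_N + s\rfloor)$; the jump set of $u_N^s$ is contained in $\bigcup_{k\in\Z}\{\phi = (k-s)\theta_N\}$ and carries jumps of geodesic size $\theta_N$. Integrating in $s$ and applying the coarea formula to $\phi$ gives
\begin{equation*}
  \int_0^1\tilde E(u_N^s)\d s = \theta_N\sum_{k\in\Z}\int_0^1\!\int_{\{\phi=(k-s)\theta_N\}}\!|\nu|_1\d\H^{d-1}\d s = \int_\Omega|\nabla\phi|_1\d x = \int_\Omega|\nabla u|_{2,1}\d x,
\end{equation*}
so a good choice $s = s_N$ realises $\limsup E_N(u_N^{s_N}) \leq F(u)$, while $\|u_N^{s_N} - u\|_\infty \lesssim \theta_N$ forces $L^1$-convergence. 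Across the jump set of an $SBV$ approximant with polyhedral $J_u$, one glues using the one-step-per-layer transition of the introductory calculation, paying exactly $\geo(u^-,u^+)|\nu|_1$ per unit hyperarea. The main obstacle is the final density-in-energy step, i.e.\ approximating a general $u \in BV(\Omega;\SS^1)$ by such $SBV$-maps $u_j$ with $F(u_j) \to F(u)$: one must simultaneously handle the absolutely continuous, Cantor, and jump parts of $Du$ while respecting the $\SS^1$-constraint and the associated topological obstructions in $d \geq 2$. I expect a coordinate-wise staircase construction aligned with the slicing representation above to deliver it, but carrying it out rigorously in $d \geq 3$ is the principal technical hurdle.
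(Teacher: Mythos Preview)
Your compactness and liminf arguments are essentially those of the paper: the paper also bounds $E_N$ from below by $c_N |Du|(\Omega)$ for compactness, and proves lower semicontinuity of the limit functional $E$ by slicing to a one-dimensional geodesic total variation (Lemma~3.2, via the relaxation result of Alicandro--Corbo Esposito--Leone). Your direct use of the partition-sum definition of $\mathrm{TV}_\geo$ is a legitimate shortcut for the one-dimensional step.

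The genuine gap is in the upper bound, and you have correctly located but not resolved it. The difficulty is \emph{not} primarily the Cantor part: even for $u\in W^{1,1}(\Omega;\SS^1)$ with no jump and no Cantor part, a lifting $\phi$ with $u=\exp(\iota\phi)$ need not exist when $d\ge 2$, because of topological obstructions (degree around $(d-2)$-dimensional singular sets). Your Lipschitz-lifting construction therefore does not cover a dense-in-energy class, and a ``coordinate-wise staircase'' will run into the same obstruction --- slicing detects the diffuse part correctly but cannot by itself produce a competitor in the full domain when the slices wind inconsistently.

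The paper's resolution proceeds in two moves you are missing. First, it invokes the relaxation theorem of Alicandro--Corbo Esposito--Leone to show that $E$ is the $L^1$-relaxation of $\int_\Omega |\nabla u|_{2,1}\,\mathrm{d}x$ on $W^{1,1}(\Omega;\SS^1)$; this reduces the recovery problem to Sobolev targets and absorbs the Cantor and jump parts in one stroke. Second --- and this is the key idea --- it uses Bethuel's density theorem to approximate any $u\in W^{1,1}(\Omega;\SS^1)$ by maps smooth outside a finite union $\Sigma$ of closed pieces of $(d-2)$-manifolds. On each lattice cube not touching $\Sigma$ a local lifting exists and your averaging/projection construction applies; the cubes touching $\Sigma$ are simply assigned an arbitrary value, and the resulting error is controlled by $\lambda^{d-1}\cdot\#\{\text{bad cubes}\}\lesssim \lambda^{d-1}\cdot \lambda^{2-d}=\lambda$, using the Minkowski content of the codimension-$2$ set $\Sigma$. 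This is precisely the mechanism that makes the energy $E_N$ insensitive to vortex-type singularities, and it is the step your proposal lacks.
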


The proof of the upper bound in Theorem~\ref{thm:N to infty} is based on the following remark: a map $u \in BV(\Omega;\SS^1)$ can be approximated in energy by maps $W^{1,1}(\Omega;\SS^1)$ which are smooth outside manifolds of codimension $2$; such maps can be suitably sampled far from the singularities to define a $u_N \in BV(\Omega;\S_N)$; a crucial observation is that the precise definition of $u_N$ close to the singularities is not important, as the energy $E_N(u_N)$ does not concentrate close to manifolds of codimension 2. It is worth noticing that the latter feature is peculiar of this regime: in the other regimes studied in~\cite{COR1} where $N = N_\e$ depends on $\e$ and $N_\e \gg \frac{1}{\e |\log \e|}$ the behavior of the recovery sequence around the singularities becomes relevant and makes the generalization to the $d$-dimensional setting of the results in~\cite{COR1} more delicate and out of the scope of the present paper.

\section{Notation and preliminary results}
Let $\SS^{d-1}=\{x\in\R^d:\,|x|=1\}$ be the unit sphere. If $u, v \in \mathbb{S}^1$, their geodesic distance on~$\mathbb{S}^1$ is denoted by $\geo(u,v)$. It is given by the angle in $[0,\pi]$ between the vectors $u$ and $v$, i.e., $\geo(u,v) = \arccos(u\cdot v)$. Observe that
\begin{equation} \label{eq:geo and eucl}
\tfrac{1}{2}|u - v| = \sin\big( \tfrac{1}{2}\geo(u,v) \big) \, .
\end{equation}
We denote the imaginary unit by $\iota$. When it is convenient we will tacitly identify $\RR^2$ with the complex plane~$\C$. Given a vector $a = (a_i)_{i=1}^d \in \RR^d$, its $1$-norm is $|a|_1 = \sum_{i=1}^d |a_i|$. We define the~$(2,1)$-norm of a matrix $A = (a_{ij})_{i,j=1}^d \in \RR^{d \times d}$ as the sum of the Euclidean norms of its columns,~i.e., 
\begin{equation*}
|A|_{2,1} := \sum_{j=1}^d \Big( \sum_{i=1}^d |a_{ij}|^2 \Big)^{\! 1/2} .
\end{equation*}
Given a unit vector $\nu\in \SS^{d-1}$, we denote by $Q_\nu$ a cube with two sides orthogonal to $\nu$, namely, we consider an orthonormal basis $(\nu,\nu_2,\dots,\nu_d)$ of $\R^d$ and we define
\begin{equation}\label{eq:defQ_nu}
Q_{\nu}=\left\{x\in\R^d:\, | x \cdot \nu |<\tfrac{1}{2} \, , \ |  x \cdot \nu_i |<\tfrac{1}{2}\right\} \, .
\end{equation}
For two sequences $\alpha_{\e}$ and $\beta_{\e}$ of positive numbers, we write $\alpha_{\e}\ll \beta_{\e}$ if $\lim_{\e \to 0}\tfrac{\alpha_{\e}}{\beta_{\e}}=0$.

\subsection{BV-functions}\label{subsec:BV}
In this section we recall basic facts about functions of bounded variation. For more details we refer to the monograph \cite{AmbFusPal}.

Let $O\subset\RR^d$ be an open set. A function $u\in L^1(O;\RR^n)$ is a function of bounded variation if its distributional derivative $\D u$ is given by a finite matrix-valued Radon measure on $O$. In that case, we write $u\in BV(O;\RR^n)$.	
\\
The space $BV_{{\rm loc}}(O;\mathbb{R}^n)$ is defined as usual. The space $BV(O;\RR^n)$ becomes a Banach space when endowed with the norm $\|u\|_{BV(O)}=\|u\|_{L^1(O)}+|\D u|(O)$, where $|\D u|$ denotes the total variation measure of $\D u$. The total variation with respect to the anisotropic norm $| \cdot |_{2,1}$ is denoted by $|\D u|_{2,1}$. When $O$ is a bounded Lipschitz domain, then $BV(O;\RR^n)$ is compactly embedded in $L^1(O;\RR^n)$. We say that a sequence $u_n$ converges weakly$^*$ in $BV(O;\RR^n)$ to $u$ if $u_n\to u$ in $L^1(O;\RR^n)$ and $\D u_n\overset{*}{\rightharpoonup}\D u$ in the sense of measures.

We state some fine properties of $BV$-functions. To this end, we need some definitions. A function $u\in L^1(O;\mathbb{R}^n)$ is said to have an approximate limit at $x\in O$ whenever there exists $z\in\mathbb{R}^n$ such that
\begin{equation*}
\lim_{\rho\to 0}\frac{1}{\rho^d}\int_{B_{\rho}(x)}|u(y)-z|\,\mathrm{d}y=0\,.
\end{equation*}
Next we introduce so-called approximate jump points. Given $x\in O$ and $\nu\in \SS^{d-1}$ we set
\begin{equation*}
B^{\pm}_{\rho}(x,\nu)=\{y\in B_{\rho}(x):\;\pm (y-x) \cdot \nu >0\}\,.
\end{equation*}
We say that $x\in O$ is an approximate jump point of $u$ if there exist $a\neq b\in\mathbb{R}^n$ and $\nu\in \SS^{d-1}$ such that
\begin{equation*}
\lim_{\rho\to 0}\frac{1}{\rho^d}\int_{B_{\rho}^+(x,\nu)}|u(y)-a|\,\mathrm{d}y=\lim_{\rho\to 0}\frac{1}{\rho^d}\int_{B^-_{\rho}(x,\nu)}|u(y)-b|\,\mathrm{d}y=0 \, .
\end{equation*}
The triplet $(a,b,\nu)$ is determined uniquely up to the change to $(b,a,-\nu)$. We denote it by $(u^+(x),u^-(x),\nu_u(x))$ and we let $J_u$ be the set of approximate jump points of $u$. The triplet $(u^+,u^-,\nu_u)$ can be chosen as a Borel function on the Borel set $J_u$. Denoting by $\nabla u$ the approximate gradient of $u$, we can decompose the measure $\D u$ as the sum
\begin{equation*}
\D u(B)=\int_B\nabla u\,\mathrm{d}x+\int_{J_u\cap B}(u^+ -u^-)\otimes\nu_u \,\mathrm{d}\mathcal{H}^{d-1}+\D^{(c)}u(B) \, ,
\end{equation*}
where $\D^{(c)}u$ is the so-called Cantor part and $\D^{(j)}u = (u^+ - u^- )\otimes\nu_u \mathcal{H}^{d-1} \mres J_u$ is the so-called jump part.
If $\S \subset \R^n$, we define the space $BV(O;\S)$ as the space of those functions $u\in BV(O;\mathbb{R}^n)$ such that $u(x)\in \S$ for $\mathcal{L}^d$-a.e.\ $x \in O$.

We will need the slicing properties of $BV$-functions. Given a unit vector $\xi \in \SS^{d-1}$, we denote by $\Pi^\xi$ the hyperplane orthogonal to $\xi$. For every set $E \subset \RR^d$ and $z \in \Pi^\xi$, the section of $E$ corresponding to $z$ is the set $E^\xi_z := \{ t \in \RR \ : \ z + t \xi \in E \}$. Accordingly, for any function $u \colon E \to \RR^n$, the function $u^\xi_z \colon E^\xi_z \to \RR^n$ is defined by $u^\xi_z(t) := u(z + t \xi)$. 

We recall a characterization of $BV$ functions by slicing \cite[Remark~3.104]{AmbFusPal}. Let us fix an open set $O\subset \RR^d$ and $u \in L^1(O; \RR^n)$. Then $u \in BV(O; \RR^n)$ if and only if for every $\xi \in \SS^{d-1}$ we have $u^\xi_z \in BV(O^\xi_z;\RR^n)$ for $\H^{d-1}$-a.e.\ $z \in \Pi^\xi$ and 
\[
\int_{\Pi^\xi}{|\D u^\xi_z|(O^\xi_z)}{ \d \H^{d-1}(z)} < \infty \, .
\]
Moreover it is possible to reconstruct the distributional gradient $\D u$ from the gradients of the slices $\D u^\xi_z$ through the formula $\D u \, \xi = \H^{d-1}\mres\Pi^\xi \otimes \D u^\xi_z$, i.e., 
\[
\D u \, \xi(B) = \int_{\Pi^\xi}{\D u^\xi_z(B^\xi_z)}{\d \H^{d-1}(z)} \, ,
\]
for every Borel set $B \subset \RR^d$. More precisely, the same decomposition holds true for each part of the decomposition of~$\D u$, namely
\begin{align*}
\int_{B}{\nabla u \, \xi}{\d x} & = \int_{\Pi^\xi}{\nabla u^\xi_z(B^\xi_z)}{\d \H^{d-1}(z)} \, ,\\
\D^{(c)} u \, \xi(B) & = \int_{\Pi^\xi}{\D^{(c)} u^\xi_z(B^\xi_z)}{\d \H^{d-1}(z)} \, , \\
\D^{(j)} u \, \xi(B) & = \int_{\Pi^\xi}{\D^{(j)} u^\xi_z(B^\xi_z)}{\d \H^{d-1}(z)} \, ,
\end{align*}
for every Borel set $B \subset \RR^d$. Moreover, $J_{u^\xi_z} = (J_u)^\xi_z$ for $\H^{d-1}$-a.e.\ $z \in \Pi^\xi$ and $(u^\xi_z)^{\pm}(t) = (u^\pm)^\xi_z(t)$  ($ = (u^\mp)^\xi_z(t)$, respectively) for every $t \in (J_u)^\xi_z$ if $\xi \cdot \nu_u(z+t\xi) > 0$ (if $\xi \cdot \nu_u(z+t\xi) < 0$, respectively).

\subsection{Known results for general models with finite phases}\label{sec:known}
We recall here some results that were proved for more general energies defined for functions taking values in a given finite set. In~\cite{Bra-Cic-Ruf}, Braides together with the first and third author consider energies $\mathcal{E}_{\e}$ defined for spin variables $u\colon \e\mathcal{L}\to\mathcal{S}$, where $\mathcal{S}$ is a finite set and $\mathcal{L}$ is a so-called thin stochastic lattice. In general, these points sets are located in a fixed neighborhood of a lower-dimensional subspace such that there is a minimal distance between points and there are no arbitrarily large holes in the neighborhood of the subspace. The energies in \cite{Bra-Cic-Ruf} can be of the form
\begin{equation*}
\mathcal{E}_{\e}(u)=\sum_{(\e x,\e y)\in(\e\mathcal{L}\cap \Omega)^2}\e^{d-1}f(x-y,u(\e x),u(\e y)),
\end{equation*}
where the energy density $f\colon \R^d\times\mathcal{S}^2\to [0,+\infty)$ has to satisfy certain growth and decay conditions. We do not state them explicitly here, but we mention that they cover in particular the case when $\mathcal{L}=\Z^d$ is a periodic lattice that is completely contained in the subspace $\R^d$ and 
\begin{equation*}
f(x,m_1,m_2)=\begin{cases}
c\,|m_1-m_2|^2 &\text{if } |x|=1 \, ,
\\
0 &\text{otherwise.}
\end{cases}
\end{equation*}
With $c=\tfrac{N}{4\pi}$ and $\mathcal{S}=\S_N$ we recover the energy $\tfrac{N}{2\pi\e}E_{\e}^N$, so that all results of \cite{Bra-Cic-Ruf} can be applied. In particular, we can use an integral representation result and the characterization of the corresponding integrand through an asymptotic cell formula. Indeed, by \cite[Theorem 5.8]{Bra-Cic-Ruf} we know that in the case of spatially homogeneous interactions the $\Gamma$-limit as $\e\to 0$ of $\tfrac{N}{2\pi\e}E_{\e}^N$ exists, is finite only on $BV(\Omega;\S_N)$, and for $u \in BV(\Omega;\S_N)$ it is of the form
\begin{equation} \label{eq:integral representation}
\int_{\Omega\cap J_u}\varphi(u^-,u^+,\nu_u)\,\mathrm{d}\mathcal{H}^{d-1},
\end{equation}
where the integrand is given by an asymptotic minimization problem with a suitable boundary conditions. More precisely, denoting by $u^{s,r}_{\nu}:\R^d\to\R$ ($\nu\in \mathbb{S}^{d-1}$ and $s,r\in\S_N$) the function
\begin{equation*}
u^{s,r}_{\nu}(x)=\begin{cases}
s &\mbox{if }   x \cdot \nu  > 0 \, ,
\\
r &\mbox{if }  x \cdot \nu  \leq 0 \, ,
\end{cases}
\end{equation*}
then in the case of just nearest neighbor interactions the function $\varphi(s,r,\nu)$ is given by
\begin{equation}\label{eq:cellformula}
\varphi(s,r,\nu)=\lim_{\e\to 0}\min\left\{\frac{N}{2\pi\e}E^N_{\e}(v,Q_{\nu}):\,v(\e i)=u_{\nu}^{s,r}(\e i)\quad\forall \,  \e i\in\e\Z^d\text{ s.t.\ }\mathrm{dist}(\e i,\partial Q_{\nu})\leq 2\e\right\},
\end{equation}
cf. \cite[Remarks 5.9 \& 4.2(i)]{Bra-Cic-Ruf} for the fact that the width of the discrete boundary layer can be taken as $2\e$. In the above formula, $Q_{\nu}$ denotes a unit cube centered at the origin with two faces orthogonal to $\nu$ as in~\eqref{eq:defQ_nu}. The energy $E^N_{\e}(u,Q_{\nu})$ denotes the energy restricted to the set $Q_{\nu}$. More in general, for any non-empty set $A\subset\R^d$ and $u \colon \e\Z^d\to\S_N$ let us introduce for later purposes the localized functional
\begin{equation*}
E^N_{\e}(u,A)=\frac{1}{2}\sum_{\substack{\e i,\e j\in\e\Z^d\cap A\\ |i-j|=1}}\e^d|u(\e i)-u(\e j)|^2.
\end{equation*}

\section{Continuum limit for fixed $N$ as lattice spacing vanishes}
In this section we identify the variational limit of the $N$-clock model as $\e \to 0$ for the scaled energy $\tfrac{N}{2\pi\e}E^N_{\e}$. We start with the following auxiliary result that will be crucial to establish the lower bound.
\begin{lemma}\label{l.sinuslemma}
Let $k\in\N$. Then for all $\theta\in [0,\pi/k]$ it holds that
\begin{equation*}
\sin^2\Big(\frac{k\theta}{2}\Big)\geq k\sin^2\Big(\frac{\theta}{2}\Big) \,.
\end{equation*}
\end{lemma}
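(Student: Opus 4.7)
The plan is to prove this by induction on $k \in \N$. The base case $k=1$ holds with equality, so the substance of the argument lies in the inductive step: assuming $\sin^2(k\theta/2)\geq k\sin^2(\theta/2)$ for every $\theta \in [0,\pi/k]$, I want to derive the same inequality with $k+1$ in place of $k$ on the smaller interval $[0,\pi/(k+1)]$. Since $[0,\pi/(k+1)] \subset [0,\pi/k]$, the induction hypothesis is immediately available for any admissible $\theta$.

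The key tool will be the trigonometric identity
\begin{equation*}
\sin^2(a+b) = \sin^2 a + \sin^2 b + 2 \sin a \sin b \cos(a+b),
\end{equation*}
which one verifies by expanding $\sin(a+b) = \sin a \cos b + \cos a \sin b$ and using $\cos^2 = 1-\sin^2$. Taking $a = k\theta/2$ and $b = \theta/2$, this yields
\begin{equation*}
\sin^2\!\Bigl(\tfrac{(k+1)\theta}{2}\Bigr) = \sin^2\!\Bigl(\tfrac{k\theta}{2}\Bigr) + \sin^2\!\Bigl(\tfrac{\theta}{2}\Bigr) + 2 \sin\!\Bigl(\tfrac{k\theta}{2}\Bigr) \sin\!\Bigl(\tfrac{\theta}{2}\Bigr) \cos\!\Bigl(\tfrac{(k+1)\theta}{2}\Bigr).
\end{equation*}
For $\theta \in [0,\pi/(k+1)]$ each of $\sin(k\theta/2)$, $\sin(\theta/2)$ and $\cos((k+1)\theta/2)$ is non-negative, the last one precisely because $(k+1)\theta/2 \leq \pi/2$. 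Hence the cross term can be dropped and the induction hypothesis yields
\begin{equation*}
\sin^2\!\Bigl(\tfrac{(k+1)\theta}{2}\Bigr) \geq \sin^2\!\Bigl(\tfrac{k\theta}{2}\Bigr) + \sin^2\!\Bigl(\tfrac{\theta}{2}\Bigr) \geq k\sin^2\!\Bigl(\tfrac{\theta}{2}\Bigr) + \sin^2\!\Bigl(\tfrac{\theta}{2}\Bigr) = (k+1)\sin^2\!\Bigl(\tfrac{\theta}{2}\Bigr),
\end{equation*}
which closes the induction.

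The main obstacle is identifying the correct manipulation. A direct attempt to deduce the inequality from the monotonicity of $\theta\mapsto \sin^2(k\theta/2)-k\sin^2(\theta/2)$ fails, since the derivative of this function changes sign inside $[0,\pi/k]$ (already for $k=2$) even though the inequality itself persists. The inductive formulation circumvents this: the shrinking of the interval from $[0,\pi/k]$ to $[0,\pi/(k+1)]$ at the next stage is exactly what is needed to ensure $\cos((k+1)\theta/2)\geq 0$, so that the cross term in the sine addition identity has the right sign and the induction closes cleanly.
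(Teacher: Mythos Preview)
Your proof is correct and takes a genuinely different route from the paper. The paper substitutes $y=k\theta/2$, reduces to showing $\sin(y)\geq\sqrt{k}\sin(y/k)$ on $[0,\pi/2]$, and then argues that $f_k(y)=\sin(y)-\sqrt{k}\sin(y/k)$ is strictly concave there, so its minimum is attained at an endpoint; the endpoint values $f_k(0)=0$ and $f_k(\pi/2)=1-\sqrt{k}\sin(\pi/(2k))$ are then checked directly (the latter via $\sin x<x$ for $k\geq 3$ and by hand for $k=2$). Your induction via the identity $\sin^2(a+b)=\sin^2 a+\sin^2 b+2\sin a\sin b\cos(a+b)$ is more elementary in that it avoids both the concavity argument and the separate endpoint verification; the shrinking interval $[0,\pi/(k+1)]\subset[0,\pi/k]$ does double duty, making the induction hypothesis available and forcing $\cos((k+1)\theta/2)\geq 0$. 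The paper's approach, by contrast, treats each $k$ independently and makes the extremal cases ($\theta=0$ and $\theta=\pi/k$) visible as the only places where the concave function could vanish.
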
	
\begin{proof}
We can assume that $k\geq 2$. Setting $y=\frac{k\theta}{2}$ we have that $y\in[0,\pi/2]$ and the claim reduces to
$\sin^2(y)\geq k\sin^2(y/k)$ for all $y\in[0,\pi/2]$. Since for $y\in[0,\pi/2]$ both $\sin(y)$ and $\sin(y/k)$ are non-negative, we can alternatively show that
\begin{equation}\label{eq:claimsin}
\sin(y)\geq\sqrt{k}\sin(y/k)\quad  \text{for all } y\in[0,\pi/2] \, .
\end{equation}
Let us define the auxiliary function $f_k(y)=\sin(y)-\sqrt{k}\sin(y/k)$. We show that it is strictly concave on $[0,\pi/2]$, so that its minimum is achieved at $y=0$ or $y=\pi/2$. Indeed, for $y\in[0,\pi/2]$ we have by the monotonicity of the sinus function that
\begin{equation*}
f_k''(y)=-\sin(y)+k^{-\tfrac{3}{2}}\sin(y/k)\leq-\sin(y)+k^{-\tfrac{3}{2}}\sin(y)\leq -\frac{1}{2}\sin(y),
\end{equation*}
so that $f_k''(y)<0$ whenever $y\in (0,\pi/2]$. Hence 
\begin{equation*}
\min_{y\in[0,\pi/2]}f_k(y)=\min\{f_k(0),f_k(\pi/2)\}=\min\{0,1-\sqrt{k}\sin(\pi/(2k))\}.
\end{equation*}
We conclude the proof once we show that $\sqrt{k}\sin(\pi/(2k)) \leq 1$ for all $k \geq 2$. Using that $\sin(x)<x$ for all $x>0$, for $k\geq 3$ we can bound the left hand side by
\begin{equation*}
\sqrt{k}\sin(\pi/(2k))\leq \frac{\pi}{2\sqrt{k}}\leq \frac{\pi}{2\sqrt{3}}<1,
\end{equation*}
while for $k=2$ we have $\sqrt{2}\sin(\pi/4)=1$. Thus $f_k(y)\geq 0$ for all $y\in[0,\pi/2]$ which yields \eqref{eq:claimsin} and concludes the proof.
\end{proof}
Next we establish a lower-semicontinuity result which helps to prove the lower bound.
\begin{lemma}\label{l.lsc}
For an open set $A\subset\Omega$ let $E(\, \cdot \, ,A) \colon L^1(A;\R^2)\to [0,+\infty]$ be the functional defined~by
\begin{equation*}
E(u,A)=\int_{A}|\nabla u|_{2,1}\,\mathrm{d}x+|\D^{(c)}u|_{2,1}(A)+\int_{J_u\cap A}\geo(u^-,u^+)|\nu_u|_1\,\mathrm{d}\mathcal{H}^{d-1}
\end{equation*} 
for $u\in BV(A;\SS^1)$ and extended to $+\infty$ otherwise. Then $u\mapsto E(u,A)$ is $L^1(A;\RR^2)$-lower semicontinuous.
\end{lemma}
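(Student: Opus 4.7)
My plan is to reduce the $d$-dimensional statement to a one-dimensional one via slicing, and then to handle the latter by a lifting argument for $\SS^1$-valued $BV$ functions on the line. Using the slicing formulae recalled in Subsection~\ref{subsec:BV}, together with $|M|_{2,1}=\sum_{j=1}^{d}|Me_j|$ and $|\nu|_1=\sum_{j=1}^{d}|\nu\cdot e_j|$, one first verifies that for every $u\in BV(A;\SS^1)$
\begin{equation*}
E(u,A) \,=\, \sum_{j=1}^{d} \int_{\Pi^{e_j}} \tilde E\bigl(u^{e_j}_z,A^{e_j}_z\bigr)\d\H^{d-1}(z),
\end{equation*}
where, for an open set $J\subset\R$ and $v\in BV(J;\SS^1)$,
\begin{equation*}
\tilde E(v,J)\,:=\,\int_{J}|v'|\d x \,+\, |D^{(c)} v|(J) \,+\, \sum_{t\in J_v\cap J}\geo(v^+(t),v^-(t)).
\end{equation*}
For the diffuse part this is the standard $BV$-slicing, while for the jump part it follows from the coarea formula applied to the projection onto $\Pi^{e_j}$ (whose Jacobian on $J_u$ equals $|\nu_u\cdot e_j|$) together with the symmetry of $\geo$, which absorbs the possible sign flip between $u^{\pm}$ and the slicewise traces recalled in Subsection~\ref{subsec:BV}.

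The core step is to prove that $\tilde E(\cdot,J)$ is $L^1(J;\R^2)$-lower semicontinuous. Assume $v_n\to v$ in $L^1$ with $\sup_n\tilde E(v_n,J)<+\infty$. On each connected component of $J$ and for every $n$, I would pick a lift $\varphi_n\in BV$ with $v_n=\exp(\iota\varphi_n)$ and all jumps of $\varphi_n$ in $(-\pi,\pi]$. The $BV$ chain rule applied to $t\mapsto e^{\iota t}$ yields $|\varphi_n'|=|v_n'|$ a.e., $|D^{(c)}\varphi_n|=|D^{(c)}v_n|$, and $|\varphi_n^+-\varphi_n^-|=\geo(v_n^+,v_n^-)$ at each jump, so that $|D\varphi_n|(J)=\tilde E(v_n,J)\leq C$. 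After normalizing $\varphi_n$ at a point chosen outside the countable union of jump sets of $\{v_n\}$, the family is equibounded in $BV$; up to subsequence $\varphi_n\to\varphi$ in $L^1$ and a.e., and continuity of $\exp(\iota\,\cdot\,)$ gives $v=\exp(\iota\varphi)$ a.e. Any lift of $v$ satisfies $|D\varphi|(J)\geq\tilde E(v,J)$ (the diffuse parts coincide by the chain rule, while $|\varphi^+-\varphi^-|\geq\geo(v^+,v^-)$ at each jump), hence by the standard $L^1$-lower semicontinuity of the total variation
\begin{equation*}
\tilde E(v,J)\,\leq\, |D\varphi|(J)\,\leq\, \liminf_n|D\varphi_n|(J)\,=\,\liminf_n\tilde E(v_n,J).
\end{equation*}

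Finally, given $u_n\to u$ in $L^1(A;\R^2)$ with $\liminf_n E(u_n,A)<+\infty$, I would pass to a subsequence realizing the liminf. The comparabilities between $\geo$ and the Euclidean distance on $\SS^1$, and between $|\cdot|_{2,1}$, $|\cdot|_1$ and the $\ell^2$ norms in finite dimension, turn the energy bound into a uniform bound on $|Du_n|(A)$; hence $u_n$ is equibounded in $BV(A;\SS^1)$ and $u\in BV(A;\SS^1)$. By Fubini, along a further subsequence $(u_n)^{e_j}_z\to u^{e_j}_z$ in $L^1$ for $\H^{d-1}$-a.e.\ $z\in\Pi^{e_j}$ and every $j$. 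Applying the one-dimensional result slicewise and Fatou's lemma in each coordinate direction, summing over $j$, and invoking the slicing identity from the first step, produces $\liminf_n E(u_n,A)\geq E(u,A)$. The hardest point is the lifting step: because of the $\R/2\pi\Z$-ambiguity, $L^1$-convergence of $v_n$ does not a priori yield convergence of any chosen phase, and the minimal-jump lift combined with a consistent normalization at a good base point is what allows one to extract a limiting lift of $v$.
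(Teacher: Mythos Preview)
Your proof is correct and follows essentially the same route as the paper: reduce to one dimension via coordinate slicing, use Fatou's lemma and the coarea formula to transfer the liminf to the slices, and then invoke superadditivity of the liminf to sum over the $d$ directions. The only genuine difference is in the one-dimensional step. The paper does not argue via liftings; instead it observes that the functional $\tilde E(\cdot,J)$ coincides with the $L^1$-relaxation of $w\mapsto\int_J|w'|\,\mathrm d t$ on $W^{1,1}(J;\SS^1)$, by \cite[Theorem~3.1 and Remark~4.3]{ACL}, and lower semicontinuity is then immediate. Your lifting argument is a valid self-contained alternative: the minimal-jump lift makes $|D\varphi_n|(J)=\tilde E(v_n,J)$ exact, the normalization at a common approximate continuity point furnishes the $BV$-bound needed to extract a limiting lift, and the reverse inequality $|D\varphi|\geq\tilde E(v,J)$ for any lift closes the argument. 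The trade-off is that you avoid an external relaxation theorem at the price of handling the $2\pi\Z$-ambiguity by hand, while the paper's approach is shorter and simultaneously identifies $\tilde E$ as a relaxed functional (a fact the paper reuses in Step~1 of Proposition~\ref{prop.upperbound_cont}).
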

\begin{proof}
	For an open set $I \subset \RR$ let $E^{1d}(\, \cdot \, , I ) \colon L^1(I;\RR^2) \to [0,+\infty]$ be defined by 
	\begin{equation*}
	E^{1d}(w,I) := \begin{cases}
	\displaystyle \int_{I}{|w'|}{\d t} + |\D^{(c)} w|(I) + \hspace{-0.3em}\sum_{t \in J_{w} \cap I}{\hspace{-0.3em} \geo\big(w^+(t),w^-(t)\big)} \, , & \!\!\! \text{ if } w \in  BV(I; \SS^1)  \, , \\
	+\infty \, , & \!\!\! \text{ otherwise.}
	\end{cases} 
	\end{equation*}
	By~\cite[Theorem 3.1]{ACL} (see also \cite[Remark 4.3]{ACL}), the functional $E^{1d}(\, \cdot \, , I)$ is the relaxation of 
	\begin{equation*}
	 \int_{I}{|w'|}{\d t} \, , \quad w \in W^{1,1}(I;\SS^1)
	\end{equation*}
	with respect to the strong topology of $L^1(I;\RR^2)$. In particular, it is lower semicontinuous.
	
	We next fix an open set $A \subset \Omega$ and $v_n, v \in L^1(A; \RR^2)$ such that $v_n \to v$ strongly in $L^1(A;\RR^2)$. We want to prove that 
	\begin{equation} \label{eq:H is lsc}
	E(v;A) \leq \liminf_{n \to +\infty} E(v_n; A) \, .
	\end{equation} 
	Without loss of generality, we assume that the right-hand side in~\eqref{eq:H is lsc} is finite and that the $\liminf$ is actually a limit. Since $|\D v_n|(A)\leq E(v_n;A)$ we obtain $v \in BV(A;\SS^1)$ and $v_n \wstar v$ weakly* in $BV(A;\RR^2)$. Note further that
	\begin{equation} \label{eq:H is a sum}
	E(v_n,A) = \sum_{\ell=1}^d \Big\{ \int_{A}{|\nabla v_n \, e_\ell|}{\d x} + |\D^{(c)} v_n \, e_\ell|(A) + \int_{J_{v_n} \cap A}{  \geo(v_n^+,v_n^-)|\nu_{v_n} \cdot e_\ell|}{\d \H^{d-1}} \Big\} \, .
	\end{equation}
		 
	Let us fix a direction $\xi \in \SS^1$, which plays the role of one of the coordinate directions~$e_\ell$. In the following we use the notation and the properties of slicing recalled in Subsection~\ref{subsec:BV}. We start by extracting a subsequence of $n$ (possibly depending on $\xi$ and which we do not relabel) such that the liminf
	\begin{equation*}
		\liminf_{n \to +\infty} \int_{A}{|\nabla v_n \, \xi|}{\d x} + |\D^{(c)} v_n \, \xi|(A) +  \int_{J_{v_n} \cap A}{  \geo(v_n^+,v_n^-)|\nu_{v_n} \cdot \xi|}{\d \H^{d-1}} 
	\end{equation*}
	is actually a limit. Moreover, since $v_n \to v$ strongly in $L^1(A;\RR^2)$, by Fubini's Theorem we extract a further subsequence (possibly depending on $\xi$ and which we do not relabel) such that 
	\begin{equation*}
	(v_n)^\xi_z \to v^\xi_z \quad \text{strongly in } L^1(A^\xi_z;\RR^2) \, , \quad \text{for } \H^{d-1}\text{-a.e.\ } z \in \Pi^\xi  .
	\end{equation*} 
	Moreover, we know that $v^\xi_z \in BV(A^\xi_z;\SS^1)$ for $\H^{d-1}$-a.e.\ $z \in \Pi^\xi$. 

	We observe now that the coarea formula (cf.\ \cite[formula (272)]{AmbFusPal} with $g=\geo(v_n^+,v_n^-)$, $E=J_{v_n}\cap A$, and $f$ the projection onto the orthogonal complement of $\xi$) implies 
	\begin{equation*}
	\int_{J_{v_n} \cap A}{\!\!\! \geo(v_n^+,v_n^-)|\nu_{v_n} \cdot \xi|}{\d \H^{d-1}} = \int \limits_{\Pi^\xi} \! \bigg[ \sum_{t \in J_{(v_n)^\xi_z} \cap A^\xi_z}{\hspace{-1.3em} \geo\Big(\big((v_n)^\xi_z\big)^+(t),\big((v_n)^\xi_z\big)^-(t)\Big)}  \bigg] \d \H^{d-1}(z) \, .
	\end{equation*}
	Hence, by the equality above and by Fatou's Lemma, we deduce that 
	\begin{equation} \label{eq:09041821}
	\begin{split}
	& \lim_{n \to +\infty} \int_{A}{|\nabla v_n \, \xi|}{\d x} + |\D^{(c)} v_n \, \xi|(A) +  \int_{J_{v_n} \cap A}{  \geo(v_n^+,v_n^-)|\nu_{v_n} \cdot \xi|}{\d \H^{d-1}}  \\
	& = \lim_{n \to +\infty} \int \limits_{\Pi^\xi} \! \bigg[ \int_{A^\xi_z}{\big|\big((v_n)^\xi_z\big)'\big|}{\d t} + |\D^{(c)} (v_n)^\xi_z|(A^\xi_z) + \hspace{-1.3em} \sum_{t \in J_{(v_n)^\xi_z} \cap A^\xi_z}{\hspace{-1.3em} \geo\Big(\big((v_n)^\xi_z\big)^+(t),\big((v_n)^\xi_z\big)^-(t)\Big)} \bigg] \d \H^{d-1}(z)\\
	& \geq \int \limits_{\Pi^\xi} \!  \liminf_{n \to +\infty} \bigg[ \int_{A^\xi_z}{\big|\big((v_n)^\xi_z\big)'\big|}{\d t} + |\D^{(c)} (v_n)^\xi_z|(A^\xi_z) + \hspace{-1.3em}\sum_{t \in J_{(v_n)^\xi_z} \cap A^\xi_z}{\hspace{-1.3em} \geo\Big(\big((v_n)^\xi_z\big)^+(t),\big((v_n)^\xi_z\big)^-(t)\Big)} \bigg] \d \H^{d-1}(z) \, .
	\end{split}
	\end{equation}
	From the one-dimensional lower semicontinuity result we infer that  
	\begin{equation*}
	\begin{split}
	& \liminf_{n\to+\infty} \int_{A^\xi_z}{\big|\big((v_n)^\xi_z\big)'\big|}{\d t} + |\D^{(c)} (v_n)^\xi_z|(A^\xi_z) + \hspace{-1.3em}\sum_{t \in J_{(v_n)^\xi_z} \cap A^\xi_z}{\hspace{-1.3em} \geo\Big(\big((v_n)^\xi_z\big)^+(t),\big((v_n)^\xi_z\big)^-(t)\Big)}\\
	& \quad = \liminf_{n \to +\infty} E^{1d}\big((v_n)^\xi_z, A^\xi_z\big)  \\
	& \quad \geq E^{1d}(v^\xi_z, A^\xi_z) = \int_{A^\xi_z}{\big|\big(v^\xi_z\big)'\big|}{\d t} + |\D^{(c)} v^\xi_z|(A^\xi_z) + \hspace{-1.3em}\sum_{t \in J_{v^\xi_z} \cap A^\xi_z}{\hspace{-1.3em} \geo\big((v^\xi_z)^+(t),(v^\xi_z)^-(t)\big)} 
	\end{split}
	\end{equation*}
	for $\H^{d-1}$-a.e.\ $z \in \Pi^\xi$. Integrating the inequality above with respect to $z \in \Pi^\xi$, again by the coarea formula, and by~\eqref{eq:09041821} we obtain that
	\begin{equation*}
	\begin{split}
	& \lim_{n \to +\infty} \int_{A}{|\nabla v_n \, \xi|}{\d x} + |\D^{(c)} v_n \, \xi|(A) + \!\!\! \int_{J_{v_n} \cap A}{\!\!\! \geo(v_n^+,v_n^-)|\nu_{v_n} \cdot \xi|}{\d \H^{d-1}}\\
	& \quad \geq \int_{A}{|\nabla v \, \xi|}{\d x} + |\D^{(c)} v \, \xi|(A) + \!\!\! \int_{J_{v } \cap A}{\!\!\! \geo(v^+,v^-)|\nu_{v} \cdot \xi|}{\d \H^{d-1}} .
	\end{split}
	\end{equation*}
	We conclude the proof of~\eqref{eq:H is lsc} by evaluating the last inequality for $\xi = e_1, \dots,e_d$, by~\eqref{eq:H is a sum}, and employing the superadditivity of the $\liminf$.
	\end{proof}
Now we can prove the lower bound for the $\Gamma$-limit of the functionals $\tfrac{N}{2\pi \e}E^N_{\e}$.

\begin{proposition}\label{p.gamma-liminf}
Let $u_{\e}\colon \e\Z^d\to\mathcal{S}_N$ and $u\in BV(\Omega;\mathcal{S}_N)$ be such that $u_{\e}\to u$ in $L^1(\Omega;\RR^2)$. Then
\begin{equation*}
\liminf_{\e\to 0}\frac{N}{2\pi\e}E_{\e}^N(u_{\e})\geq \frac{4 \sin^2\left(\tfrac{\theta_N}{2}\right)}{\theta_N^2}\int_{\Omega\cap J_u}\geo(u^-,u^+)|\nu_u|_1\,\mathrm{d}\mathcal{H}^{d-1}.
\end{equation*}
\end{proposition}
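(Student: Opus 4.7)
The plan is to exploit the two preceding lemmas separately: I use Lemma~\ref{l.sinuslemma} to pass edgewise from the squared chordal distance to the geodesic distance on $\S_N$, and Lemma~\ref{l.lsc} to lift the resulting discrete geodesic sum to the $BV$ jump energy in the limit.

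\emph{Step 1 (pointwise sinus inequality).} For any $u_1,u_2\in\S_N$ one has $\geo(u_1,u_2)=k\theta_N$ for some $k\in\{0,1,\dots,\lfloor N/2\rfloor\}$, so $k\theta_N\le\pi$ and $\theta_N\le\pi/k$ whenever $k\ge 1$. Combining \eqref{eq:geo and eucl} with Lemma~\ref{l.sinuslemma} applied to $\theta=\theta_N$ I obtain
\begin{equation*}
|u_1-u_2|^2=4\sin^2(k\theta_N/2)\ge 4k\sin^2(\theta_N/2)=\frac{4\sin^2(\theta_N/2)}{\theta_N}\,\geo(u_1,u_2),
\end{equation*}
which also holds trivially for $k=0$. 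Summing this pointwise estimate against the weight $\tfrac{1}{2\theta_N}\e^{d-1}$ over the ordered nearest-neighbor pairs in $\Omega$ gives
\begin{equation*}
\frac{N}{2\pi\e}E_\e^N(u_\e)\ge \frac{4\sin^2(\theta_N/2)}{\theta_N^2}\cdot\frac{1}{2}\sum_{\langle i,j\rangle\text{ in }\Omega}\e^{d-1}\,\geo\big(u_\e(\e i),u_\e(\e j)\big).
\end{equation*}

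\emph{Step 2 (reduction to Lemma~\ref{l.lsc}).} Let $\tilde u_\e$ denote the piecewise constant extension of $u_\e$ on the cubes $Q_\e^i:=\e i+(-\e/2,\e/2)^d$, so that $\tilde u_\e\in BV_{\mathrm{loc}}(\R^d;\SS^1)$ with jump set supported on cube faces. Since each face has a unit coordinate normal, $|\nu_{\tilde u_\e}|_1=1$ on $J_{\tilde u_\e}$. Fix an open set $A$ with $\overline{A}\subset\Omega$ and take $\e$ small enough that every cube face meeting $A$ lies between two cube centers in $\Omega$; then
\begin{equation*}
E(\tilde u_\e,A)=\int_{J_{\tilde u_\e}\cap A}\geo(\tilde u_\e^+,\tilde u_\e^-)\,\mathrm d\mathcal H^{d-1}\le \frac{1}{2}\sum_{\langle i,j\rangle\text{ in }\Omega}\e^{d-1}\,\geo\big(u_\e(\e i),u_\e(\e j)\big).
\end{equation*}
The assumption $u_\e\to u$ in $L^1(\Omega;\R^2)$ yields $\tilde u_\e\to u$ in $L^1(A;\R^2)$, and $u\in BV(A;\S_N)\subset BV(A;\SS^1)$ has $\nabla u=0$ and $\D^{(c)}u=0$ because it takes values in the finite set $\S_N$. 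Hence Lemma~\ref{l.lsc} gives
\begin{equation*}
\liminf_{\e\to 0}E(\tilde u_\e,A)\ge E(u,A)=\int_{J_u\cap A}\geo(u^-,u^+)|\nu_u|_1\,\mathrm d\mathcal H^{d-1}.
\end{equation*}
Chaining this with Step~1 proves the desired bound on $A$, and the full statement follows by letting $A\uparrow\Omega$ via monotone convergence for the Radon measure $\geo(u^-,u^+)|\nu_u|_1\,\mathcal H^{d-1}\mres J_u$.

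The only genuinely nontrivial point is the edgewise estimate of Step~1, which is exactly the content of Lemma~\ref{l.sinuslemma}: it forbids the discrete spin from ``saving energy'' by performing one long jump of geodesic amplitude $k\theta_N$ rather than $k$ minimal jumps of amplitude $\theta_N$, and thereby pins down the exact constant $4\sin^2(\theta_N/2)/\theta_N^2$. Everything else is essentially routine bookkeeping, consisting of identifying the discrete geodesic sum with $E(\tilde u_\e,\cdot)$ up to boundary corrections and invoking the already established lower semicontinuity of $E$.
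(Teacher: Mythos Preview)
Your proof is correct and follows essentially the same route as the paper: apply Lemma~\ref{l.sinuslemma} edgewise to convert squared chordal distance into geodesic distance, recognize the resulting discrete sum as the jump energy $E(\tilde u_\e,A)$ of the piecewise constant extension on a compactly contained $A\subset\subset\Omega$, invoke Lemma~\ref{l.lsc}, and exhaust $\Omega$. The paper is slightly terser in that it does not separately name $\tilde u_\e$ or spell out that $\nabla u=0$ and $\D^{(c)}u=0$ for $u\in BV(\Omega;\S_N)$, but the argument is the same.
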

\begin{proof}
To simplify the notation we denote $\theta_N$ by $\theta$. Let $A\subset\subset \Omega$ be an open set. By~\eqref{eq:geo and eucl} it holds that 
\begin{equation*}
|u_\e(\e i) - u_\e(\e j)| = 2 \sin \big(\tfrac{1}{2} \geo(u_\e(\e i), u_\e(\e j) \big) \, .
\end{equation*}
Since $u_\e$ takes values in $\S_N$, the geodesic distance $\geo(u_\e(\e i),u_\e(\e j))$ is an integer multiple of $\theta$, i.e., there exists a $k \in \N$ (depending on $i$, $j$, and $\e$) such that $\geo(u_\e(\e i),u_\e(\e j)) = k \theta$. Note that $k \theta \leq \pi$. Hence from Lemma \ref{l.sinuslemma} we infer that
\begin{equation*}
	\begin{split}
		\frac{1}{2}|u_\e(\e i) - u_\e(\e j)|^2 &= 2 \sin^2 \big(\tfrac{1}{2} \geo(u_\e(\e i), u_\e(\e j) \big)  = 2 \sin^2\Big(\frac{k \theta}{2}\Big) \geq 2 k \sin^2\Big(\frac{ \theta}{2}\Big) \\
		&= 2\geo(u_\e(\e i), u_\e(\e j))\frac{\sin^2(\tfrac{\theta}{2} )}{\theta} \, .
	\end{split}
\end{equation*}
Since $u_{\e}$ is piecewise constant on cubes of the form $Q=(-\e/2,\e/2)^d+z$ with $z\in\Z^d$, we obtain that for $\e$ small enough
\begin{equation*}
\frac{N}{2\pi\e}E_{\e}^N(u_{\e})\geq \frac{4 \sin^2\left(\frac{\theta}{2}\right)}{\theta^2}\int_{A\cap J_u}\geo(u_{\e}^-,u_{\e}^+)|\nu_{u_{\e}}|_1\,\mathrm{d}\mathcal{H}^{d-1},
\end{equation*}
where we also used that $N=2\pi/\theta$ and that the discrete energy counts each interaction twice. Note that by Lemma \ref{l.lsc} the functional
\begin{equation*}
u\mapsto \int_{A\cap J_u}\geo(u^-,u^+)|\nu_u|\,\mathrm{d}\mathcal{H}^{d-1}
\end{equation*}
is $L^1(A;\RR^2)$-lower semicontinuous on $BV(A;\S_N)$, as it is the restriction of a lower semicontinuous functional to a closed subset of $BV(A;\S_N)$. Thus letting $\e\to 0$ we deduce that
\begin{equation*}
\liminf_{\e\to 0}\frac{N}{2\pi\e}E_{\e}^N(u_{\e})\geq \frac{4 \sin^2\left(\tfrac{\theta}{2}\right)}{\theta^2}\int_{A\cap J_u}\geo(u^-,u^+)|\nu_u|_1\,\mathrm{d}\mathcal{H}^{d-1}.
\end{equation*}
The claim now follows from the arbitrariness of $A\subset\subset\Omega$.
\end{proof}

We next prove that the corresponding upper bound for the $\Gamma$-limit.
\begin{proposition}\label{p.Gamma-limsup}
Let $u\in BV(\Omega;\S_N)$. Then there exists a sequence $u_{\e} \colon \e\Z^d\to\S_N$ such that $u_{\e}\to u$ in $L^1(\Omega;\RR^2)$ and
\begin{equation*}
\limsup_{\e\to 0}\frac{N}{2\pi\e}E_{\e}^N(u_{\e})=\frac{4 \sin^2\left(\tfrac{\theta_N}{2}\right)}{\theta_N^2}\int_{\Omega\cap J_u}\geo(u^-,u^+)|\nu_u|_1\,\mathrm{d}\mathcal{H}^{d-1}.
\end{equation*}	
\end{proposition}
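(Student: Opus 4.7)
The plan is to leverage the integral representation from Subsection~\ref{sec:known}. By \cite[Theorem~5.8]{Bra-Cic-Ruf}, the functionals $\tfrac{N}{2\pi\e}E^N_\e$ $\Gamma$-converge in $L^1(\Omega;\R^2)$ to a surface functional $\int_{\Omega\cap J_u}\varphi(u^-,u^+,\nu_u)\,\d\H^{d-1}$ on $BV(\Omega;\S_N)$, with density $\varphi(s,r,\nu)$ given by the cell formula~\eqref{eq:cellformula}. Since the existence of a recovery sequence is built into $\Gamma$-convergence, the statement of Proposition~\ref{p.Gamma-limsup} reduces to the pointwise identification
\begin{equation*}
\varphi(s,r,\nu)=\tfrac{4\sin^2(\theta_N/2)}{\theta_N^2}\geo(s,r)|\nu|_1\qquad\text{for all }s,r\in\S_N,\ \nu\in\SS^{d-1}.
\end{equation*}
The inequality $\varphi(s,r,\nu)\geq\tfrac{4\sin^2(\theta_N/2)}{\theta_N^2}\geo(s,r)|\nu|_1$ follows from Proposition~\ref{p.gamma-liminf} applied to any admissible competitor $v_\e$ in~\eqref{eq:cellformula}: its $L^1$-limit $u\in BV(Q_\nu;\S_N)$ inherits the trace $u^{s,r}_\nu$ on $\partial Q_\nu$, and a slicing argument in each coordinate direction $e_\ell$ (the $e_\ell$-lines meeting $\{x\cdot\nu=0\}\cap Q_\nu$ form a set of $\H^{d-1}$-measure $|\nu_\ell|$ on $\Pi^{e_\ell}$, and on each such line $u$ must realise a total one-dimensional jump of at least $\geo(s,r)$) gives $\int_{J_u\cap Q_\nu}\geo(u^-,u^+)|\nu_u|_1\,\d\H^{d-1}\geq\geo(s,r)|\nu|_1$.

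For the matching upper bound I would use a slab construction generalising the one-direction recovery sequence sketched in the introduction. Writing $s=\exp(\iota\omega k^+\theta_N)\,r$ with $k^+:=\geo(s,r)/\theta_N\in\{0,1,\dots,\lfloor N/2\rfloor\}$ and $\omega\in\{+1,-1\}$ realising the shortest rotation from $r$ to $s$, define
\begin{equation*}
v_\e(\e i)=\begin{cases} r & \text{if } \e i\cdot\nu\leq 0,\\ \exp(\iota\omega k\theta_N)\,r & \text{if } (k-1)\e<\e i\cdot\nu\leq k\e,\ k=1,\dots,k^+,\\ s & \text{if } \e i\cdot\nu>k^+\e.\end{cases}
\end{equation*}
Two nearest neighbours $\e i,\,\e j=\e i\pm\e e_\ell$ sitting in adjacent slabs carry values differing by exactly one elementary rotation, so $|v_\e(\e i)-v_\e(\e j)|^2=4\sin^2(\theta_N/2)$, while pairs inside the same slab contribute $0$. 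A direct count (via projection onto $\Pi^{e_\ell}$) shows that the number of $e_\ell$-edges in $Q_\nu$ crossing any given hyperplane $\{x\cdot\nu=k\e\}$ is $|\nu_\ell|/\e^{d-1}+O(\e^{2-d})$, so summing over $\ell\in\{1,\dots,d\}$ and over the $k^+$ slab interfaces yields
\begin{equation*}
E^N_\e(v_\e,Q_\nu)=4k^+\e\sin^2(\theta_N/2)|\nu|_1+O(\e^2).
\end{equation*}
Multiplying by $1/(\e\theta_N)$ and using $k^+\theta_N=\geo(s,r)$ gives the upper bound on $\varphi$.

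The single delicate point is the boundary condition in~\eqref{eq:cellformula}, which forces $v_\e=u^{s,r}_\nu$ on the $2\e$-collar of $\partial Q_\nu$. On the top and bottom faces of $Q_\nu$ this is automatic for $\e$ small, since the transition strip $\{0<x\cdot\nu\leq k^+\e\}$ then lies well inside $Q_\nu$; on the $2(d-1)$ lateral faces, however, the slab values generically disagree with $u^{s,r}_\nu$ and have to be overwritten by hand. The conflict region is contained in an $O(\e)$-tubular neighbourhood of the $(d-2)$-dimensional ridge $\partial Q_\nu\cap\{x\cdot\nu=0\}$, hence contains $O(\e^{2-d})$ lattice sites and alters $E^N_\e(v_\e,Q_\nu)$ by at most $O(\e^2)$, which is $O(\e)$ after the $1/(\e\theta_N)$ rescaling and vanishes as $\e\to 0$ since $N$ (and hence $\theta_N$) is fixed. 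This is the main, if routine, obstacle; with the correction in place, $\varphi(s,r,\nu)$ is identified and Proposition~\ref{p.Gamma-limsup} follows from the integral representation~\eqref{eq:integral representation}.
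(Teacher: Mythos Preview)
Your proposal is correct and follows essentially the same route as the paper: reduce to the cell formula~\eqref{eq:cellformula} via the integral representation of \cite{Bra-Cic-Ruf}, bound $\varphi(s,r,\nu)$ from above using the slab competitor that rotates by one elementary angle $\theta_N$ across each of $k^+=\geo(s,r)/\theta_N$ hyperplanes orthogonal to $\nu$, count interactions direction-by-direction via projection onto $\Pi^{e_\ell}$ to recover $|\nu_\ell|$, and absorb the boundary modification into an $O(\e)$ error coming from the $(d-2)$-dimensional ridge $\partial Q_\nu\cap\{x\cdot\nu=0\}$. The only difference is cosmetic: you derive the lower bound on $\varphi$ by a separate slicing argument on the cell problem, whereas the paper simply invokes Proposition~\ref{p.gamma-liminf} at the level of the full functional, which already forces the equality in the statement once the upper bound on $\varphi$ is in hand.
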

\begin{proof}
	To simplify the notation we denote $\theta_N$ by $\theta$. Due to the discussion in Section~\ref{sec:known}, the $\Gamma$-limit of $\frac{N}{2\pi\e}E_{\e}^N$ has the form~\eqref{eq:integral representation}. To prove the upper bound it suffices to define a suitable candidate for the minimum problem \eqref{eq:cellformula} whose energy can be bounded in the limit as $\e\to 0$ by $4\sin^2(\tfrac{\theta}{2})\theta^{-2}\geo(s,r)|\nu|_1$. Write $s=\exp(\iota k_s\theta)$ and $r=\exp(\iota k_r\theta)$ with $0\leq k_s,k_r\leq N-1$. We will treat the case when $k_r=0$, i.e. $r=(1,0)$, and $0<k_s\theta\leq\pi$. The construction we provide can then be composed with a rotation in the co-domain to cover the general case. The idea is to define a candidate whose angular variable jumps by $\theta$ along the discretization of~$k_s$ parallel hyperplanes orthogonal to $\nu$, where all hyperplanes are $\mathcal{O}(\e)$-close to the hyperplane $\Pi_{\nu}:=\{x\in\R^d  :  \, x \cdot \nu =0\}$. The correction in order to satisfy the boundary condition will be of lower order. In formulas, let $u_{\e} \colon \e\Z^d\to\S_N$ be defined by
\begin{equation*}
u_{\e}(\e i):=
\begin{cases}
\exp\big(\iota \min\big\{k_s,\max\{0,\lfloor i \cdot \nu \rfloor\}\big\}\theta\big) &\mbox{if } \mathrm{dist}(\e i,\partial Q_{\nu})>2\e \, ,
\\
u^{s,r}_{\nu}(\e i) &\mbox{if } \mathrm{dist}(\e i,\partial Q_{\nu})\leq 2\e \, ,
\end{cases}
\end{equation*}
where $\lfloor x\rfloor$ denotes the integer part of $x$. Hence for all $\e i\in\e\Z^d\cap Q_{\nu}$ such that $\e i \cdot \nu \leq 0$ we have $u_{\e}(\e i)=r$, while for all $\e i\in\e\Z^d$ with $\e i \cdot \nu \geq k_s\e$ we have $u_{\e}(\e i)=s$, so that for non-vanishing interactions at least one point belongs to the set
\begin{equation*}
H^{k_s}_{\e}:=\{x\in Q_{\nu}:\,  x \cdot \nu \in (0,\e k_s)\} \, .
\end{equation*}
Note that we have the volume bound
\begin{equation*}
|H^{k_s}_{2\e}\cap \{\text{dist}(x,\partial Q_{\nu})\leq 4\e\}|\leq Ck_s\e^2,
\end{equation*}
where $C$ depends only on the dimension. Hence, for $\e$ small enough,
\begin{equation}\label{eq:cardinality_closetobdr}
\#\{z\in\Z^d:\e z\in H^{k_s}_{2\e}\cap \{\text{dist}(x,\partial Q_{\nu})\leq 3\e\}\}\leq C k_s\e^{2-d}
\end{equation}
To simplify notation, we also define the auxiliary function $v_{\e} \colon \e\Z^d\to\S_N$ by 
\begin{equation*}
v_{\e}(\e i):=\exp\big(\iota \min\big\{k_s,\max\{0,\lfloor i \cdot \nu \rfloor\}\big\}\theta\big) \, .
\end{equation*}
Since $|u_{\e}(\e i)-u_{\e}(\e j)|^2\leq 4$ it follows from the almost additivity of the set function $A\mapsto E_{\e}^N(u,A)$ that the energy of $u_{\e}$ can be estimated by
\begin{align*}
\frac{N}{2\pi\e}E_{\e}^N(u_{\e},Q_{\nu})&\leq \frac{N}{2\pi\e}E_{\e}^N(u_{\e},H^{k_s}_{2\e}\cap \{\text{dist}(x,\partial Q_{\nu})\leq 3\e\})+\frac{N}{2\pi\e}E_{\e}^N(v_{\e},Q_{\nu})
\\
&\leq CNk_s\e+\frac{N}{2\pi\e}E_{\e}^N(v_{\e},Q_{\nu})\leq CN^2 \e+\frac{N}{2\pi\e}E_{\e}^N(v_{\e},Q_{\nu}) \, .
\end{align*}
As $N$ is fixed, the first term in the right hand side vanishes when $\e\to 0$. Since $u_{\e}$ is admissible for the minimum problem \eqref{eq:cellformula} it suffices to show that
\begin{equation}\label{eq:claim1}
\limsup_{\e\to 0}\frac{N}{2\pi\e}E_{\e}^N(v_{\e},Q_{\nu})\leq \frac{4\sin^2(\tfrac{\theta}{2} )}{\theta^2}k_s\theta|\nu|_1=\frac{4\sin^2(\tfrac{\theta}{2})}{\theta}k_s|\nu|_1 \, .
\end{equation}

We start by noticing that when $\e i,\e j\in\e\Z^d\cap Q_{\nu}$ are such that $|i-j|=1$ and $v_{\e}(\e i)\neq v_{\e}(\e j)$, then $\e i \cdot \nu \neq \e j \cdot \nu$. Without loss of generality, we assume $\e i \cdot \nu > \e j \cdot \nu$. Note that $j \cdot \nu \geq 0$. Indeed, if instead $j \cdot \nu <0$, then $i \cdot \nu  <1$ and thus $v_{\e}(\e i)=v_{\e}(\e j)$, which  contradicts $v_{\e}(\e i)\neq v_{\e}(\e j)$. Moreover, by a similar argument we also know that $k_s+1 > i \cdot \nu$. To sum up, we have that 
\begin{equation}
	0 \leq \e j \cdot \nu < \e i \cdot \nu < (k_s+1)\e \, .
\end{equation}
Finally, we have the estimate $|(\e i-\e j) \cdot \nu |\leq \e$, so that by \eqref{eq:geo and eucl} 
\begin{equation}\label{eq:thetajump}
|v_{\e}(\e i)-v_{\e}(\e j)|^2=4\sin^2(\tfrac{\theta}{2}) \, .
\end{equation} 

It remains to count the interactions. We will first split them according to their jump between~$\e j \cdot \nu$ and $ \e i \cdot \nu$. More precisely, for a natural number $k\in \{1,\ldots,k_s\}$ we set
\begin{equation*}
I_{k,\e}:=\{(\e i,\e j)\in(\e\Z^d\cap Q_{\nu})^2:\ |i-j|=1 \, ,\ \lfloor  j \cdot \nu \rfloor=k-1 \, , \ \lfloor  i \cdot \nu \rfloor=k\}
\end{equation*} 
Note that a pair $(\e i, \e j) \in I_{k,\e}$ is only counted once. Since each pair of interactions in the energy is counted twice, we deduce from \eqref{eq:thetajump} and the equality $N/2\pi=1/\theta$ that
\begin{equation*}
\frac{N}{2\pi\e}E_{\e}^N(v_{\e},Q_{\nu})\leq \frac{4\sin^2(\tfrac{\theta}{2})}{\theta}\sum_{k=1}^{k_s}\e^{d-1}\# I_{k,\e} \, .
\end{equation*}
We deduce then~\eqref{eq:claim1} from the asymptotic formula 
\begin{equation}
	\limsup_{\e\to 0}\e^{d-1} \# I_{k,\e}\leq|\nu|_1 \, .
\end{equation}

The above formula can be justified as follows: first further subdivide the set $I_{k,\e}$ into the $d$ disjoint sets $(I_{k,\e}^\ell)_{\ell=1}^d$ defined~by
\begin{equation*}
I_{k,\e}^\ell:=\{(\e i,\e j)\in I_{k,\e}:\,(i-j) \text{ is parallel to } e_\ell\} \quad \text{for } \ell = 1, \dots, d \,.
\end{equation*}
Observe that $I_{k,\e}=\bigcup_{\ell=1}^d I^\ell_{k,\e}$ and that if there exists a pair $(\e i,\e j) \in I_{k,\e}^\ell$, then $\nu_\ell\neq 0$. Indeed, in that case the hyperplane $H_{\nu}=\{ x \cdot \nu =0\}$ does not contain $(i-j)$, and in turn $e_\ell$,  by definition of~$I_{k,\e}$. Next we estimate where the line $\e j+\R e_\ell$ intersects the hyperplane $H_{\nu}=\{ x \cdot \nu =0\}$. It does in a unique point $\e j+ \lambda e_\ell$ when $I_{k,\e}^\ell\neq\emptyset$. Since $0\leq  \e j \cdot \nu \leq k\e$ it follows that
\begin{equation*}
|\lambda|\leq \frac{k\e}{|\nu_\ell|} \, .
\end{equation*} 
Therefore, given $t>1$, for $\e=\e(t)$ small enough the intersection point is contained in $tQ_{\nu} \cap H_{\nu}$. Since by definition the mapping $I_{k,\e}^\ell\ni (\e i,\e j) \mapsto \e j-(\e j \cdot e_\ell) e_\ell$ is injective, we obtain that
\begin{equation*}
\# I_{k,\e}^\ell\leq\#\{\e i\in\e\Z^d:\,\e i\in \Pi_{x_\ell=0}(tQ_{\nu}\cap H_{\nu})\} \, ,
\end{equation*} 
where $\Pi_{x_\ell=0}$ denotes the projection onto the subspace $\{x_\ell=0\}$. In particular, it holds that
\begin{equation*}
\e^{d-1}\#I_{k,\e}= \sum_{\ell=1}^d\e^{d-1}\#I_{k,\e}^\ell\leq\sum_{\ell=1}^d\e^{d-1}\#\big(\e\Z^d\cap\Pi_{x_\ell=0}(tQ_{\nu}\cap H_{\nu})\big) \, .
\end{equation*}
\begin{figure}[H]
	\includegraphics{./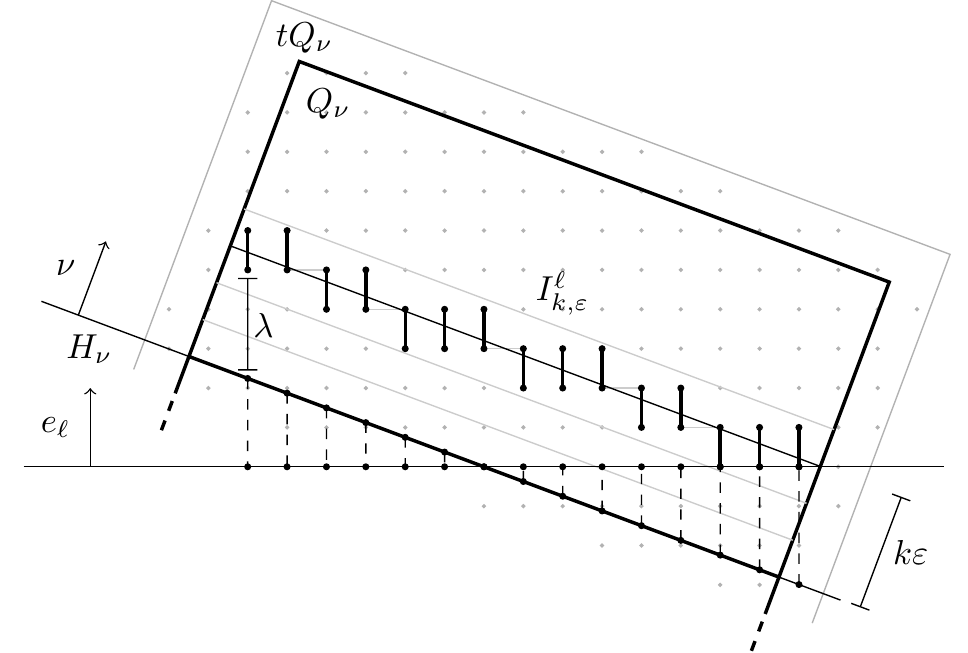}
\caption{Counting the number of points in $I^\ell_{k,\e}$.}

\label{fig:counting}
\end{figure}
\noindent By elementary geometric considerations we can bound the cardinality via a $(d-1)$-dimensional volume as
\begin{equation*}
	\lim_{\e\to 0}\e^{d-1}\left(\#\e\Z^d\cap\Pi_{x_\ell=0}(tQ_{\nu}\cap H_{\nu})\right)=\mathcal{H}^{d-1}(\Pi_{x_\ell=0}(tQ_{\nu}\cap H_{\nu}))=t^{d-1}\mathcal{H}^{d-1}(\Pi_{x_\ell=0}(Q_{\nu}\cap H_{\nu})) \, .
\end{equation*}
Since $t > 1$ was arbitrary we deduce that
\begin{equation*}
	\limsup_{\e\to 0}\e^{d-1} \# I_{k,\e}^\ell\leq \mathcal{H}^{d-1}(\Pi_{x_\ell=0}(Q_{\nu}\cap H_{\nu})) \, .
\end{equation*}
We claim that the right hand side term equals $|\nu_\ell|$, which then concludes the proof summing over~$\ell$. This is a consequence of the coarea formula in the form \cite[Theorem 2.93]{AmbFusPal} taking $f$ to be the projection $\Pi_{x_\ell=0}$ and $E=Q_{\nu}\cap H_{\nu}$ and using the fact that the $(d-1)$-dimensional coarea factor of the projection $\Pi_{x_\ell=0}$ on the tangent space $H_{\nu}$ is given by $|\nu_\ell|$ (cf. \cite[formula (3.110)]{AmbFusPal}).
\end{proof}

\section{Limit of the continuum functional for large $N$}
In this section we study the $\Gamma$-convergence of the limit functionals $E_N$ defined on $L^1(\Omega;\RR^2)$ by 
\begin{equation} \label{eq:def of EN}
E_N(u) := \begin{cases}
	\displaystyle \frac{4 \sin^2\big(\tfrac{\theta_N}{2}\big)}{\theta_N^2}\int_{\Omega\cap J_u}\geo(u^-,u^+)|\nu_u|_1\,\mathrm{d}\mathcal{H}^{d-1} & \text{if } u \in BV(\Omega; \S_N) \, , \\
+ \infty & \text{otherwise,}
	\end{cases}
\end{equation}
as $N\to +\infty$,  where we write $\theta_N$ to stress the dependence  on $N$ of the minimal angle between vectors in $\S_N$. We show that the $\Gamma$-limit of $E_N$ coincides with the functional derived in \cite{COR2} in the regime $N=N_{\e}\ll \frac{1}{\e|\log\e|}$ and $d=2$. More precisely, we define the functional
\begin{equation} \label{eq:def of E}
	\begin{split}
		E(u) := \begin{cases}	
			\displaystyle \int_{\Omega}|\nabla u|_{2,1}\,\mathrm{d}x+|\D^{(c)}u|_{2,1}(\Omega)+\int_{\Omega\cap J_u}\!\!\! \geo(u^-,u^+)|\nu_u|_1\,\mathrm{d}\mathcal{H}^{d-1},   & \text{if } u \in BV(\Omega;\mathbb{S}^1) \, , \\
			+ \infty & \text{otherwise,}
		\end{cases} 
	\end{split}
\end{equation}
for $u \in L^1(\Omega;\RR^2)$.

We first state and proof the lower bound together with a compactness result.
\begin{proposition}[Lower bound and compactness]\label{prop.lb+compact_cont}
Let $u_N\in BV(\Omega;\S_N)$ be a sequence such that
\begin{equation*}
\sup_N E_N(u_N)<+\infty \, .
\end{equation*}
Then up to subsequences $u_N\to u\in BV(\Omega;\mathbb{S}^1)$ strongly in $L^1(\Omega;\R^2)$. Moreover, for any sequence $u_N\in BV(\Omega;\S_N)$ and $u\in BV(\Omega;\mathbb{S}^1)$ such that $u_N\to u$ in $L^1(\Omega;\R^2)$ it holds that
\begin{equation*}
\liminf_{N\to +\infty}E_N(u_N)\geq E(u) \, .
\end{equation*}
\end{proposition}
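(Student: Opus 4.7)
My plan rests on a single key observation: every $u_N \in BV(\Omega;\S_N)$ takes values in a finite subset of $\SS^1$ and is therefore a Caccioppoli partition, so that $\nabla u_N = 0$ and $\D^{(c)} u_N = 0$. Consequently the full functional $E$ from~\eqref{eq:def of E}, evaluated on such $u_N$, reduces to its jump part, and one gets the exact identity
\begin{equation*}
E_N(u_N) \;=\; \alpha_N\, E(u_N), \qquad \alpha_N := \frac{4\sin^2(\theta_N/2)}{\theta_N^2} \to 1.
\end{equation*}
I would then reduce both parts of the proposition to properties of $E$.

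For compactness, I would first note that $|u_N| = 1$ a.e.\ yields a uniform $L^\infty$-bound. To control the total variation, I would use $|u_N^+ - u_N^-| \leq \geo(u_N^-,u_N^+)$ (which follows from $\sin t \leq t$ applied to~\eqref{eq:geo and eucl}) and $|\nu_{u_N}|_1 \geq 1$, together with the fact that $\alpha_N$ is bounded away from $0$ uniformly in $N$, to obtain
\begin{equation*}
|\D u_N|(\Omega) \,=\, \int_{J_{u_N}} |u_N^+ - u_N^-|\,\mathrm{d}\H^{d-1} \,\leq\, C\, E_N(u_N) \,\leq\, C.
\end{equation*}
Standard $BV$-compactness then produces a subsequence converging in $L^1(\Omega;\RR^2)$ to some $u \in BV(\Omega;\RR^2)$, and a further a.e.\ convergent subsequence forces $|u| = 1$ a.e., i.e., $u \in BV(\Omega;\SS^1)$.

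For the liminf inequality I would apply Lemma~\ref{l.lsc} with $A = \Omega$, which states precisely that $E$ is $L^1(\Omega;\RR^2)$-lower semicontinuous. Then, for every $\delta > 0$ and $N$ large enough so that $\alpha_N \geq 1-\delta$,
\begin{equation*}
\liminf_{N \to +\infty} E_N(u_N) \,=\, \liminf_{N \to +\infty} \alpha_N\, E(u_N) \,\geq\, (1-\delta) \liminf_{N \to +\infty} E(u_N) \,\geq\, (1-\delta)\, E(u),
\end{equation*}
and letting $\delta \to 0$ yields the claim. The only structural fact that deserves care is that $BV$-maps with finite image have neither absolutely continuous nor Cantor derivative — a standard property of Caccioppoli partitions, provable by composing $u_N$ with Lipschitz cut-off functions separating the values of $\S_N$ — but this is not a real obstacle. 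Beyond it, the proposition is essentially a one-line consequence of Lemma~\ref{l.lsc} and the trivial limit $\alpha_N \to 1$, and I expect no further difficulty.
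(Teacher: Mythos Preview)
Your proposal is correct and follows essentially the same route as the paper: bound $|\D u_N|(\Omega)$ from above by $\alpha_N^{-1}E_N(u_N)$ via $\geo\geq|\cdot-\cdot|$ to obtain compactness, and combine the inequality $E_N(u_N)\geq\alpha_N E(u_N)$ with the lower semicontinuity of $E$ from Lemma~\ref{l.lsc} and $\alpha_N\to 1$ to get the $\liminf$ bound. The only cosmetic difference is that you make explicit the equality $E_N(u_N)=\alpha_N E(u_N)$ (via the vanishing of the absolutely continuous and Cantor parts for finite-valued $BV$ maps), whereas the paper writes the corresponding relation as an inequality and leaves that observation implicit.
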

\begin{proof}
Since $\geo(u,v)\geq |u-v|$, the functionals $E_N$ satisfy
\begin{equation*}
E_N(u)\geq\frac{4\sin^2(\tfrac{\theta_N}{2})}{\theta_N^2}|\D u|(\Omega) \, .
\end{equation*}
Note that $\theta_N=2\pi/N$ implies $\theta_N\to 0$ as $N\to +\infty$. Hence
\begin{equation}\label{eq:limittheta}
\lim_{N\to +\infty}\frac{4\sin^2(\tfrac{\theta_N}{2})}{\theta_N^2}=1 \, .
\end{equation}
Thus the compactness statement follows from the inclusion $\S_N\subset\mathbb{S}^1$ and standard compactness results in $BV(\Omega;\R^2)$.

In order to prove the lower bound, note that
\begin{equation*}
	\begin{split}
		E_N(u) & \geq \frac{4\sin^2(\tfrac{\theta_N}{2})}{\theta_N^2}\left(\int_{\Omega}|\nabla u|_{2,1}\,\mathrm{d}x+|\D^{(c)}u|_{2,1}(\Omega)+\int_{\Omega\cap J_{u}}\geo(u^-,u^+)|\nu_{u}|_1\,\mathrm{d}\mathcal{H}^{d-1}\right) \\
		& = \frac{4\sin^2(\tfrac{\theta_N}{2})}{\theta_N^2} E(u) 
	\end{split}
\end{equation*}
for all $u\in BV(\Omega;\mathbb{S}^1)$, cf.~\eqref{eq:def of EN}--\eqref{eq:def of E}. The functional $E$ is $L^1(\Omega;\RR^2)$-lower semicontinuous by Lemma~\ref{l.lsc}. Hence, the claim follows from \eqref{eq:limittheta}.
\end{proof}

We now establish the upper bound via several approximations combined with a relaxation result for integral functionals defined on $W^{1,1}(\Omega;\mathbb{S}^1)$.

 We recall here the density result proven in~\cite{Bet}.  Let 
\begin{equation*}
	\begin{split}
		\mathcal{R}^\infty_1(\Omega;\mathbb{S}^1) := \{ u \in W^{1,1}(\Omega; \mathbb{S}^1) : \, & u \in C^\infty(\Omega \sm \Sigma;\mathbb{S}^1 ) \, ,\  \Sigma = \mathop{\textstyle \bigcup_{i=1}^m} \Sigma_h \, , \ m \in \mathbb{N}  \\ 
		& \Sigma_h \text{ closed  subset of a $(d-2)$-dimensional manifold}    \} \, .
	\end{split}
\end{equation*}

\begin{theorem} \label{thm:Bethuel}
	The class $\mathcal{R}^\infty_1(\Omega;\mathbb{S}^1)$ is dense in $W^{1,1}(\Omega;\mathbb{S}^1)$ with respect to the strong convergence in $W^{1,1}(\Omega;\RR^2)$.
\end{theorem}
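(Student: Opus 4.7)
The plan is to follow Bethuel's original construction from \cite{Bet}. First, regularize $u\in W^{1,1}(\Omega;\SS^1)$ by convolution: set $u_\rho := u \ast \eta_\rho$ with $\eta_\rho$ a standard mollifier. Since $|u|=1$ a.e., Jensen's inequality yields $|u_\rho|\leq 1$ pointwise, while $u_\rho\to u$ strongly in $W^{1,1}(\Omega;\RR^2)$ as $\rho\to 0$. The obstruction to projecting $u_\rho$ back onto $\SS^1$ is that $u_\rho$ may vanish; the whole construction is organized around controlling the bad set $\{|u_\rho|<1/2\}$.

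Next, introduce a cubic grid $\mathcal{G}_\delta$ of mesh size $\delta\gg\rho$ (the precise relation $\delta=\delta(\rho)$ being chosen at the end) and declare a cube $Q\in\mathcal{G}_\delta$ \emph{good} if $\inf_{Q}|u_\rho|\geq 1/2$ and \emph{bad} otherwise. Since $u_\rho\to u$ in measure and $|u|=1$ a.e., Chebyshev's inequality implies that the union $V_{\mathrm{bad}}$ of bad cubes has vanishing Lebesgue measure as $\rho\to 0$. By averaging over translates of the grid one can further select a translate for which the $L^1$-mass of $|\nabla u_\rho|$ on the $k$-skeleton of $\mathcal{G}_\delta$ (for $k=1,\dots,d-1$) is controlled, up to dimensional constants, by $\delta^{k-d}\|\nabla u_\rho\|_{L^1(\Omega)}$.

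Third, define the approximation $v$. On the union of good cubes set $v := u_\rho/|u_\rho|$, which is smooth, $\SS^1$-valued, and satisfies $|\nabla v|\leq C|\nabla u_\rho|$. On each bad cube we define $v$ by $0$-homogeneous radial extension of the boundary data, performed inductively over the skeleta of $V_{\mathrm{bad}}$ in descending dimension $k=d, d-1,\ldots, 2$: given an $\SS^1$-valued map on the boundary of a $k$-cell, extend it to the interior by composition with the projection onto the boundary along rays emanating from the cell's center. Each such extension is $\SS^1$-valued away from the center and satisfies the scaling estimate
\begin{equation*}
\int_{k\text{-cell}}|\nabla v|\,\d x \leq C\delta\int_{\partial(k\text{-cell})}|\nabla v|\,\d\H^{k-1}.
\end{equation*}
Stopping the induction at $k=2$ leaves $v$ smooth outside the $(d-2)$-skeleton of $V_{\mathrm{bad}}$, which is a finite union of closed subsets of $(d-2)$-dimensional affine planes, so $v\in \mathcal{R}^\infty_1(\Omega;\SS^1)$.

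Summing the scaling estimates over bad cells and using the chosen grid translate to convert boundary integrals back into bulk integrals on a $\delta$-neighborhood $V_{\mathrm{bad}}^{(\delta)}$ of the bad set yields
\begin{equation*}
\int_{V_{\mathrm{bad}}}|\nabla v|\,\d x \leq C\int_{V_{\mathrm{bad}}^{(\delta)}}|\nabla u_\rho|\,\d x,
\end{equation*}
which vanishes once $\delta=\delta(\rho)\to 0$ is chosen slowly enough, using $|V_{\mathrm{bad}}|\to 0$ and the equi-integrability of $|\nabla u_\rho|$; on the good region one has $v\to u$ in $W^{1,1}$. The main obstacle is the joint calibration of $\rho$, $\delta$ and the grid translate so that (i) the iterated radial extensions are compatible across shared faces, (ii) the singular set $\Sigma$ fits into a finite union of closed subsets of $(d-2)$-dimensional manifolds, and (iii) the boundary-to-bulk comparison on the skeleta yields a right-hand side that can be made arbitrarily small.
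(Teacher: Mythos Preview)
The paper does not give a proof of this theorem; it is quoted from Bethuel~\cite{Bet} and used as a black box in Step~3 of the proof of Proposition~\ref{prop.upperbound_cont}. There is thus no ``paper's own proof'' to compare against.

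Your sketch follows the architecture of Bethuel's original argument: mollify, decompose into good and bad cubes, project on the good region, and fill the bad cubes by iterated $0$-homogeneous extensions so that the singular set ends up in the $(d-2)$-skeleton. As a plan this is correct. Two points deserve tightening. First, the smallness of $V_{\mathrm{bad}}$ does not follow from Chebyshev alone, since a bad cube is defined via an infimum; you need the Lipschitz bound $\|\nabla u_\rho\|_\infty\lesssim\rho^{-1}$ to propagate the pointwise condition $|u_\rho(x_0)|<1/2$ to a set of definite measure inside each bad cube, and then relate $\delta$ to $\rho$. Second, your ``descending'' induction needs a base: to radially extend on a $2$-cell you must already have $\SS^1$-valued data on the $1$-skeleton of $V_{\mathrm{bad}}$, including on edges internal to the bad region where $|u_\rho|$ may be small. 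In Bethuel's proof this is handled by the averaging over grid translates, which is used not only to control the skeletal energy but also to ensure that on the $1$-skeleton one can project to $\SS^1$ (or, alternatively, one defines the map on the $1$-skeleton by hand and propagates upward). The obstacles you list at the end are precisely the places where the real work lies.
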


\begin{proposition}[Upper bound]\label{prop.upperbound_cont}
For every function $u\in BV(\Omega;\mathbb{S}^1)$ there exists a sequence $u_N\in BV(\Omega;\S_N)$ such that $u_N\to u$ strongly in $L^1(\Omega;\R^2)$ and 
\begin{equation*}
\lim_{N\to +\infty} E_N(u_N)= E(u) \, .
\end{equation*} 
\end{proposition}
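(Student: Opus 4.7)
The plan is a two-step approximation: first reduce, via density in energy, to maps in the class $\mathcal{R}^\infty_1(\Omega;\SS^1)$ of Theorem~\ref{thm:Bethuel}; then, for such maps, construct an $\S_N$-valued recovery sequence by angular sampling. Since $4\sin^2(\theta_N/2)/\theta_N^2 \to 1$ as $N\to+\infty$, this prefactor contributes only a higher-order correction.

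\textbf{Step 1 (reduction to $\mathcal{R}^\infty_1$).} I first claim that it suffices to prove the upper bound for $u\in\mathcal{R}^\infty_1(\Omega;\SS^1)$, for which $E(u)=\int_{\Omega}|\nabla u|_{2,1}\,\mathrm{d}x$ because $u\in W^{1,1}(\Omega;\SS^1)$ has no jump or Cantor part. The reduction is based on the fact that, on $BV(\Omega;\SS^1)$, $E$ coincides with the $L^1$-relaxation of the functional $v\mapsto \int_{\Omega}|\nabla v|_{2,1}\,\mathrm{d}x$ defined on $W^{1,1}(\Omega;\SS^1)$. This relaxation identity can be established by slicing along each coordinate direction $e_\ell$, invoking the one-dimensional relaxation result already used in the proof of Lemma~\ref{l.lsc}, together with a mollify-and-project-onto-$\SS^1$ argument that respects the anisotropic norm. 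Combined with Theorem~\ref{thm:Bethuel} and the lower semicontinuity of $E$ (Lemma~\ref{l.lsc}), a diagonal extraction produces $v_k\in\mathcal{R}^\infty_1(\Omega;\SS^1)$ with $v_k\to u$ in $L^1(\Omega;\RR^2)$ and $E(v_k)\to E(u)$.

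\textbf{Step 2 (angular sampling).} Fix $u\in\mathcal{R}^\infty_1(\Omega;\SS^1)$ with singular set $\Sigma$ of codimension $2$ and let $U_{r_N}(\Sigma)$ be its open $r_N$-tubular neighborhood, with $r_N\to 0$ to be chosen. A global lift of $u$ on $\Omega\setminus\Sigma$ need not exist, but the projection $\pi_N\colon\SS^1\to\S_N$ defined by $\pi_N(\exp(\iota\varphi)):=\exp(\iota\theta_N\lfloor\varphi/\theta_N\rfloor)$ is well-defined, since the $2\pi$-ambiguity in the lift is absorbed by $\theta_N N=2\pi$. I set
\begin{equation*}
u_N(x) := \begin{cases} \pi_N(u(x)) & \text{if } x\in\Omega\setminus U_{r_N}(\Sigma),\\ 1 & \text{if } x\in U_{r_N}(\Sigma), \end{cases}
\end{equation*}
which lies in $BV(\Omega;\S_N)$. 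Locally, on any ball in $\Omega\setminus\Sigma$ admitting a smooth lift $u=\exp(\iota\varphi)$, the jump set of $u_N$ is $\bigcup_{k\in\ZZ}\{\varphi=k\theta_N\}$, with unit normal $\nabla\varphi/|\nabla\varphi|$ and jump amplitude $\geo(u_N^-,u_N^+)=\theta_N$.

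\textbf{Step 3 (energy estimate and conclusion).} Outside the tube, by the coarea formula
\begin{equation*}
\sum_{k\in\ZZ}\theta_N\!\int_{\{\varphi=k\theta_N\}\cap(\Omega\setminus U_{r_N}(\Sigma))}\!\frac{|\nabla\varphi|_1}{|\nabla\varphi|}\,\mathrm{d}\mathcal{H}^{d-1}
\end{equation*}
is a Riemann sum converging as $N\to+\infty$ to $\int_{\Omega\setminus U_{r_N}(\Sigma)}|\nabla\varphi|_1\,\mathrm{d}x=\int_{\Omega\setminus U_{r_N}(\Sigma)}|\nabla u|_{2,1}\,\mathrm{d}x$, using $|\partial_j u|=|\partial_j\varphi|$. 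The tubular contribution splits into a boundary-jump part bounded by $\pi\,\mathcal{H}^{d-1}(\partial U_{r_N}(\Sigma))\leq C r_N\,\mathcal{H}^{d-2}(\Sigma)$ and a bulk part bounded by $\int_{U_{r_N}(\Sigma)}|\nabla u|_{2,1}\,\mathrm{d}x$; both vanish for a suitable $r_N\to 0$ since $u\in W^{1,1}$ and $\mathcal{H}^{d-2}(\Sigma)<+\infty$. Multiplying by $4\sin^2(\theta_N/2)/\theta_N^2\to 1$ and noting $|u_N-u|\leq\theta_N$ outside the tube, I obtain $u_N\to u$ in $L^1(\Omega;\RR^2)$ together with $\lim_N E_N(u_N)=E(u)$. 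A standard diagonal argument combined with Step~1 closes the proof.

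The main obstacle I expect is the relaxation identity in Step~1: one has to construct $W^{1,1}(\Omega;\SS^1)$-approximants of a generic $BV(\Omega;\SS^1)$-map whose $(2,1)$-Dirichlet energy recovers all three pieces $\nabla u$, $\D^{(c)}u$, and $\D^{(j)}u$ with the correct weights. Once this is established, Steps 2--3 amount to a level-set Riemann-sum computation, made possible by the fact that codimension-$2$ singularities contribute negligibly in this regime---precisely the feature emphasized in the introduction that distinguishes the present setting from the more delicate regimes of~\cite{COR1,COR2}.
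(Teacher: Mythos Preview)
Your overall strategy matches the paper's: reduce to $u\in\mathcal R^\infty_1$ via relaxation and Bethuel density, then build an $\mathcal S_N$-valued competitor whose energy recovers $\int_\Omega|\nabla u|_{2,1}$. The paper implements Step~1 by quoting the relaxation formula of \cite[Theorem~3.1]{ACL} and checking only the inequality $K(a,b,\nu)\le \geo(a,b)|\nu|_1$ via one-dimensional profiles; this is lighter than the full relaxation identity you propose to establish by slicing and mollify--project. The paper also inserts an extension step to a larger domain $\tilde\Omega\supset\!\supset\Omega$ before invoking Theorem~\ref{thm:Bethuel}, so that the approximant is smooth up to $\partial\Omega$ away from $\Sigma$; you skip this, and without it your level-set computation may run into trouble near $\partial\Omega$.

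The substantive difference is in the construction of $u_N$. The paper first discretizes $u$ on a cubic $\lambda$-grid by averaging a local lift on each cube, obtaining a piecewise constant $\mathbb S^1$-valued map $u_\lambda$ whose energy is controlled by a finite sum and a Poincar\'e-type estimate; only then does it project to $\mathcal S_N$, handling the projection error by the elementary triangle inequality $\geo(\mathfrak P_N a,\mathfrak P_N b)\le\geo(a,b)+2\theta_N$. Your direct projection $u_N=\pi_N(u)$ is more elegant, but the sentence ``is a Riemann sum converging to $\int|\nabla\varphi|_1$'' hides a real issue: the coarea formula only gives $t\mapsto F(t):=\int_{\{\varphi=t\}}|\nabla\varphi|_1/|\nabla\varphi|\,\mathrm d\mathcal H^{d-1}$ in $L^1$, and for an $L^1$ function the specific Riemann sums $\sum_k\theta_N F(k\theta_N)$ need not be finite, let alone converge. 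The standard fix is an averaging argument: since $\int_0^{\theta_N}\sum_{k=0}^{N-1}F(k\theta_N+\alpha)\,\mathrm d\alpha=\int_0^{2\pi}F$, one can choose $\alpha_N\in[0,\theta_N)$ with $\sum_k\theta_N F(k\theta_N+\alpha_N)\le\int F$, and replace $\pi_N$ by the shifted (still $\mathcal S_N$-valued) projection $\exp(\iota\varphi)\mapsto\exp(\iota\theta_N\lfloor(\varphi-\alpha_N)/\theta_N\rfloor)$. With this adjustment your route goes through; the paper's lattice detour trades this measure-theoretic subtlety for a longer but entirely elementary computation.
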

\begin{proof}

	Thanks to Proposition~\ref{p.gamma-liminf}, it is enough to prove that for every $u \in BV(\Omega;\SS^1)$ there exists a sequence $u_N\in BV(\Omega;\S_N)$ such that $u_N\to u$ strongly in $L^1(\Omega;\R^2)$ and 
	\begin{equation} \label{eq:upper bound}
	\limsup_{N\to +\infty} E_N(u_N) \leq E(u) \, .
	\end{equation} 

\step{1}\ (Reducing to the case $u \in W^{1,1}(\Omega;\mathbb{S}^1))$. 
Let us start by considering the functional given by
\begin{equation} \label{eq:functional in W11}
	\int_\Omega |\nabla u|_{2,1}  \d x \, , \quad \text{if } u \in W^{1,1}(\Omega;\SS^1) 
\end{equation}
and by $+\infty$ otherwise in $L^1(\Omega; \RR^2)$.  This functional satisfies all the assumptions of the functionals studied in~\cite{ACL}, cf.\ assumptions (H1)--(H5) therein. Then, by~\cite[Theorem~3.1]{ACL}, its relaxation is given by 
\begin{equation*}
	\int_\Omega |\nabla u|_{2,1} \d x  + |\D^{(c)}u|_{2,1}(\Omega) + \int_{\Omega \cap J_u} K(u^-,u^+,\nu_u) \d \H^{d-1} , \quad \text{if } u \in BV(\Omega;\SS^1)
\end{equation*}
and by $+ \infty$ otherwise in $L^1(\Omega;\RR^2)$. The density of the surface energy $K \colon \SS^1 \x \SS^1 \x \SS^{d-1} \to [0,+\infty)$ is characterized by the formula
\begin{equation*}
K(a,b,\nu) := \inf \Big\{ \int_{Q_\nu} |\nabla \psi|_{2,1} \d x : \, \psi \in \mathcal{P}(a,b,\nu) \Big\} \, ,
\end{equation*}
where $Q_\nu$ is a unit cube centered at the origin with two faces orthogonal to $\nu$ and $\mathcal{P}(a,b,\nu)$ is the collection of all $\psi \in W^{1,1}(Q_\nu;\SS^1)$ with $\psi(x) = a$ if $x \cdot \nu = -\frac{1}{2}$, $\psi(x) = b$ if $x \cdot \nu = \frac{1}{2}$, and~$\psi$ is periodic with period $1$ in the direction orthogonal to $\nu$. In particular, $\mathcal{P}(a,b,\nu)$ contains the collection of functions with a one-dimensional profile in the direction $\nu$, i.e., functions $\psi \in W^{1,1}(Q_\nu;\SS^1)$ such that there exists a curve $\gamma \in W^{1,1}((-\frac{1}{2}, \frac{1}{2}); \SS^1)$ with $\gamma(-\frac{1}{2}) = a$, $\gamma(\frac{1}{2}) = b$ satisfying $\psi(x) = \gamma(x \cdot \nu)$. For such functions we have $\nabla \psi(x) = \gamma'(x \cdot \nu) \otimes \nu$ and therefore, since $|\gamma'(x \cdot \nu) \otimes \nu |_{2,1} = |\gamma'(x \cdot \nu)| \, |\nu|_1$, 
\begin{equation*}
K(a,b,\nu) \leq \int_{Q_\nu} |\nabla \psi|_{2,1} \d x  =  |\nu|_1 \int_{Q_\nu} |\gamma'(x \cdot \nu)| \d x = |\nu|_1  \int_{-\frac{1}{2}}^\frac{1}{2} |\gamma'(t)| \, \d t \, .
\end{equation*}
Taking the infimum over all such curves $\gamma \in W^{1,1}((-\frac{1}{2}, \frac{1}{2}); \SS^1)$ with $\gamma(-\frac{1}{2}) = a$, $\gamma(\frac{1}{2}) = b$, we conclude that 
\begin{equation*}
K(a,b,\nu) \leq \geo(a,b) |\nu|_1 \, .
\end{equation*}
In particular, the relaxation of~\eqref{eq:functional in W11} is smaller than $E$, cf.~\eqref{eq:def of E}. This entails that  for every $u \in BV(\Omega;\SS^1)$ there exists a sequence $u_j \in W^{1,1}(\Omega;\SS^1)$ such that $u_j \to u$ in $L^1(\Omega;\RR^2)$ and 
\begin{equation*}
	\limsup_{j \to +\infty} \int_\Omega |\nabla u_j|_{2,1}  \d x \leq E(u) \, .
\end{equation*}
Thanks to this property and to a diagonal argument, it is enough to prove the upper bound~\eqref{eq:upper bound} assuming $u \in W^{1,1}(\Omega;\SS^1)$. 

\step{2}\ (Extending outside $\Omega$). Let $u \in W^{1,1}(\Omega;\SS^1)$. There exists $t>0$ and a bi-Lipschitz map $\Gamma\colon (\partial \Omega \x(-t,t))\to \Gamma(\partial \Omega\x (-t,t))$ such that $\Gamma(x,0)=x$ for all $x\in\partial \Omega$, $\Gamma(\partial \Omega \x (-t,t))$ is an open neighborhood of $\partial \Omega$ and
\begin{equation}\label{eq:bicollar}
\Gamma(\partial \Omega\x (-t,0))\subset \Omega,\quad\quad \Gamma(\partial \Omega \times(0,t))\subset\RR^2\setminus\ol \Omega\,.
\end{equation}
This result is a consequence of \cite[Theorem 7.4 \& Corollary 7.5]{Luu-Vae}; details can be found for instance in \cite[Theorem 2.3]{Lic}. The extension of $u$ is then achieved via reflection. More precisely, for a sufficiently small $\tilde t>0$ we define it on $\tilde \Omega$ with $\tilde \Omega=\Omega+B_{\tilde t}(0)$ by
\begin{equation}\label{eq:extensionbyreflection}
\tilde{u}(x)=
\begin{cases}
u(\Gamma(P(\Gamma^{-1}(x)))) &\mbox{if $x\notin \Omega$} \, ,
\\
u(x) &\mbox{otherwise,}
\end{cases}
\end{equation}
where $P(x,\tau)=(x,-\tau)$. Since $\Gamma$ is bi-Lipschitz, we have that $\tilde{u}\in W^{1,1}(\tilde \Omega;\SS^1)$ and by a change of variables we can bound the $L^1$-norm of its gradient via
\begin{equation}\label{eq:gradbound}
\int_{\tilde \Omega}{|\nabla \tilde{u}|}{\d x}\leq \int_{\Omega}{|\nabla u|}{\d x}+C_{\Gamma} \int_{\tilde \Omega \setminus \Omega}{|(\nabla u)\circ \Gamma\circ P\circ\Gamma^{-1}|}{\d x}\leq C_{\Gamma}\int_{\Omega}{|\nabla u|}{\d x} \, ,
\end{equation}
where the constant $C_{\Gamma}$ depends only on the bi-Lipschitz properties of $\Gamma$ and the dimension. With an abuse of notation we will denote the extended function $\tilde u \in W^{1,1}(\tilde \Omega;\SS^1)$ again by~$u$. 

\step{3}\ (Reducing to the case $u \in \mathcal{R}^\infty_1(\tilde \Omega;\SS^1)$). Given $u \in W^{1,1}(\Omega;\SS^1)$, we extend it to a function in $W^{1,1}(\tilde \Omega;\SS^1)$ as in the previous step. By Theorem~\ref{thm:Bethuel} there exists a sequence $u_j \in \mathcal{R}^\infty_1(\tilde \Omega;\SS^1)$  such that $u_j \to u$ strongly in $W^{1,1}(\tilde \Omega;\RR^2)$. In particular, 
\begin{equation*}
	\lim_{j \to +\infty} \int_{\Omega} |\nabla u_j|_{2,1}  \d x = \int_{\Omega} |\nabla u|_{2,1} \, .
\end{equation*}
Hence, by a diagonal argument it is enough to prove the upper bound~\eqref{eq:upper bound} assuming $u \in \mathcal{R}^\infty_1(\tilde \Omega;\SS^1)$. 

\step{4}\ (Reducing to the case of piecewise constant $\SS^1$-valued maps). Let $u \in \mathcal{R}^\infty_1(\tilde \Omega;\SS^1)$. Then there exists $\Sigma = \bigcup_{h=1}^m \Sigma_h$ with $\Sigma_h$  closed  subset of a smooth $(d-2)$-dimensional manifold such that $u \in C^\infty(\tilde \Omega \sm \Sigma;\SS^1) \cap W^{1,1}(\tilde \Omega;\SS^1)$. We construct now an approximation of~$u$ through $\SS^1$-valued maps which are piecewise constant on a lattice of spacing $\lambda > 0$. Let us consider the family of half-open cubes 
\begin{equation*}
	I_\lambda(\lambda z) = \lambda z + \lambda [0,1)^d \, , \quad z \in \ZZ^d 
\end{equation*}
and the set 
\begin{equation*}
	\Omega^{\lambda} := \bigcup \{  I_\lambda(\lambda z) : \, z \in \ZZ^d \text{ such that } I_\lambda(\lambda z) \cap \Omega \neq \emptyset \} \, .
\end{equation*}
Let $\Omega'$ be such that $\Omega \subset \subset \Omega' \subset \subset \tilde \Omega$. For $\lambda$ small enough we have $\Omega^\lambda \subset \subset \Omega' \subset \subset \tilde \Omega$. We now define the piecewise constant function $u_\lambda \colon \Omega^\lambda \to \SS^1$ as follows. Let $z \in \ZZ^d$ be such that $I_\lambda(\lambda z) \subset \Omega^\lambda$. If $\ol{I_\lambda(\lambda z)} \cap \Sigma = \emptyset$, the map $u$ is $C^\infty$ in the interior of $I_\lambda(\lambda z)$ and thus it admits a lifting $\varphi_z$  (unique up to a multiple integer of $2 \pi$), which is $C^\infty$ in the interior of $I_\lambda(\lambda z)$, namely $u = \exp(\iota \varphi_z)$ in $I_\lambda(\lambda z)$. We consider the average
\begin{equation*}
	\ol \varphi_z := \frac{1}{\lambda^d} \int_{I_\lambda(\lambda z)} \varphi_z(x) \d x 
\end{equation*}
and we set $u_\lambda(x) := \exp(\iota \ol \varphi_z)$ for $x \in I_\lambda(\lambda z)$. If, instead, $\ol{I_\lambda(\lambda z)} \cap \Sigma \neq \emptyset$ we put $u_\lambda(x) := e_1$ for $x \in I_\lambda(\lambda z)$ (the precise value $e_1$ being not relevant). 

We remark that $u_\lambda \to u$ strongly in $L^1(\Omega;\RR^2)$.  Indeed, let $B$ be a ball such that $B \subset \subset \Omega \sm \Sigma$. Since $B$ is simply connected and $u \in C^\infty(B;\SS^1)$, there exists a lifting $\varphi \in C^\infty(B;\RR)$, namely, $u = \exp(\iota \varphi)$ in $B$. If $I_\lambda(\lambda z) \cap B \neq \emptyset$, then $\ol{I_\lambda(\lambda z)} \cap \Sigma = \emptyset$ for $\lambda$ small enough. In particular, we can consider the lifting $\varphi_z$ of $u$ in~$I_{\lambda}(\lambda z)$ used in the definition of~$u_\lambda$. By uniqueness of the liftings up to integer multiples of $2\pi$, there exists a $k_z \in \ZZ$ such that $\varphi_z = \varphi + 2 \pi k_z$. This entails 
\begin{equation*}
	\ol \varphi_z = \frac{1}{\lambda^d} \int_{I_\lambda(\lambda z)} \varphi (y) \d y  + 2 \pi k_z \, .
\end{equation*}
Given $x\in B$, we consider a family of cubes $I_\lambda(\lambda z_\lambda) \ni x$. By Lebesgue's differentiation theorem 
\begin{equation*}
	\frac{1}{\lambda^d} \int_{I_\lambda(\lambda z_\lambda)} \varphi (y) \d y \to \varphi(x)
\end{equation*}
for $\mathcal{L}^d$-a.e.\ $x \in B$. Then $u_\lambda \to u$ a.e.\ in $\Omega$ and by dominated convergence we obtain $u_\lambda \to u$ in $L^1(\Omega;\RR^2)$.

Let us prove that 
\begin{equation} \label{eq:limsup on W11}
	\limsup_{\lambda \to 0} \int_{\Omega^\lambda \cap J_{u_\lambda}} \geo(u_\lambda^-,u_\lambda^+)|\nu_{u_\lambda}|_1\,\mathrm{d}\mathcal{H}^{d-1} \leq \int_\Omega |\nabla u|_{2,1} \d x \, .
\end{equation}
For $i \in \{1,\dots, d\}$ we define the families of indices 
	\begin{equation*}
		\begin{split}
			\mathcal{Z}_i(\lambda) & := \{ z \in \ZZ^d : \,   I_{\lambda}(\lambda z)  \cup  I_{\lambda}(\lambda (z+e_i))  \subset \Omega^\lambda  \} \, , \\
			\mathcal{G}_i(\lambda) & := \{ z \in \mathcal{Z}_i(\lambda)  : \,  \ol{I_{\lambda}(\lambda z)} \cap \Sigma = \emptyset \quad \text{and} \quad \ol{I_{\lambda}(\lambda (z+e_i))} \cap \Sigma = \emptyset\} \, , \\
			\mathcal{B}_i(\lambda) & := \{ z \in \mathcal{Z}_i(\lambda) : \,  \ol{I_{\lambda}(\lambda z)} \cap \Sigma \neq \emptyset \quad \text{or} \quad  \ol{I_{\lambda}(\lambda (z+e_i))}  \cap \Sigma \neq \emptyset \} \, .
		\end{split}
	\end{equation*} 

	Let $z \in \mathcal{G}_i(\lambda)$. As in the definition of $u_\lambda$, we let $\varphi_z$ and $\varphi_{z+e_i}$ be the liftings of $u$ in $I_{\lambda}(\lambda z)$ and $I_{\lambda}(\lambda (z+e_i))$, respectively. Moreover, since $u$ is $C^\infty$ in the interior of the rectangle $I_{\lambda}(\lambda z) \cup I_{\lambda}(\lambda (z+e_i))$, it admits a $C^\infty$ lifting $\varphi$ such that $u = \exp(\iota \varphi)$ in $I_{\lambda}(\lambda z) \cup I_{\lambda}(\lambda (z+e_i))$. By uniqueness of the liftings up to integer multiples of $2\pi$, there exist $k_z, k_{z+e_i} \in \ZZ$ such that $\varphi_z = \varphi + 2 \pi k_z$ in $I_{\lambda}(\lambda z)$ and $\varphi_{z+e_i} = \varphi + 2 \pi k_{z+e_i}$ in $I_{\lambda}(\lambda (z+e_i))$. Note that 
\begin{equation*}
	\ol \varphi_z = \frac{1}{\lambda^d} \int_{I_\lambda(\lambda z)} \varphi (x) \d x  + 2 \pi k_z \, , \quad \ol \varphi_{z+e_i} = \frac{1}{\lambda^d} \int_{I_\lambda(\lambda (z+e_i))} \varphi (x) \d x  + 2 \pi k_{z+e_i} \, .
\end{equation*}
Now we are in a position to estimate
\begin{equation} \label{eq:estimate on good cubes}
	\begin{split}
		\geo\big(u_\lambda(\lambda(z+e_i)), u_\lambda(\lambda z)\big) & = \geo\big(\exp(\iota \ol \varphi_{z+e_i}), \exp(\iota \ol \varphi_z)\big) \\
		& \leq \frac{1}{\lambda^d} \Big| \int_{I_\lambda(\lambda (z+ e_i))} \varphi (x) \d x  -    \int_{I_\lambda(\lambda z)} \varphi (x) \d x \Big| \\
		& = \frac{1}{\lambda^d}   \int_{I_\lambda(\lambda (z))} \big| \varphi (x+\lambda e_i)  - \varphi (x) \big| \d x   \\
		& \leq  \frac{1}{\lambda^{d-1}}   \int_{I_\lambda(\lambda (z))}  \int_0^1 \big| \de_i \varphi (x+t \lambda e_i)  \big| \d t  \d x \\
		& =  \frac{1}{\lambda^{d-1}}  \int_0^1 \int_{I_\lambda(\lambda (z))}  \big| \de_i u (x+t \lambda e_i)  \big|   \d x \d t \, .
	\end{split}
\end{equation}
Using the fact that $\Omega^\lambda \subset \subset \Omega'$, for $\lambda$ small enough we obtain
\begin{equation*}
	\begin{split}
		  \sum_{i=1}^d  \sum_{z \in \mathcal{G}_i(\lambda)} \lambda^{d-1} \geo\big(u_\lambda(\lambda(z+e_i)), u_\lambda(\lambda z)\big)  & \leq   \sum_{i=1}^d  \sum_{z \in \mathcal{G}_i(\lambda)}  \int_0^1 \int_{I_\lambda(\lambda (z))}  \big| \de_i u (x+t \lambda e_i)  \big|   \d x \d t \\
		  & \leq  \int_0^1  \sum_{i=1}^d    \int_{\Omega^\lambda}  \big| \de_i u (x+t \lambda e_i)  \big|   \d x \d t \\
		  & \leq    \sum_{i=1}^d    \int_{\Omega'}  \big| \de_i u (x)  \big|   \d x =  \int_{\Omega'}  \big| \nabla u (x)  \big|_{2,1}   \d x \, .
	\end{split}
\end{equation*}

Let $z \in \mathcal{B}_i(\lambda)$. Since $\ol{I_{\lambda}(\lambda z)} \cap \Sigma \neq \emptyset$  or $\ol{I_{\lambda}(\lambda (z+e_i))}  \cap \Sigma \neq \emptyset$, we have that $\ol{I_{\lambda}(\lambda z)} \subset B_{4\lambda\sqrt{d}}(\Sigma)$. By~\cite[Theorem~2.104]{AmbFusPal}, the Minkowski content of $\Sigma$ equals its Hausdorff measure, namely $\frac{\mathcal{L}^d(B_\rho(\Sigma))}{\omega_2 \rho^2} \to \H^{d-2}(\Sigma)$ as $\rho \to 0$.  This implies that 
\begin{equation*}
	\# \mathcal{B}_i(\lambda) \leq \frac{1}{\lambda^d} \mathcal{L}^d(B_{4\lambda\sqrt{d}}(\Sigma)) \leq \frac{1}{\lambda^d} 2 \H^{d-2}(\Sigma) \omega_2 (4\lambda\sqrt{d})^2 \leq C_{\Sigma, d} \frac{1}{\lambda^{d-2}}
\end{equation*}
for $\lambda$ small enough. Using the rough estimate $\geo\big(u_\lambda(\lambda(z+e_i)), u_\lambda(\lambda z)\big) \leq \pi$ we deduce that 
\begin{equation} \label{eq:estimate on bad cubes}
	\begin{split}
		  \sum_{i=1}^d  \sum_{z \in \mathcal{B}_i(\lambda)} \lambda^{d-1} \geo\big(u_\lambda(\lambda(z+e_i)), u_\lambda(\lambda z)\big)   \leq  C_{\Sigma, d}  \lambda \, ,
	\end{split}
\end{equation}
the constant $C_{\Sigma,d}$ being larger than the previous one. 

From~\eqref{eq:estimate on good cubes} and \eqref{eq:estimate on bad cubes} it follows that 
	\begin{equation*}
		\begin{split}
			& \int_{\Omega^{\lambda} \cap J_{u_\lambda} }{ \! \geo(u_\lambda^+,u_\lambda^-)|\nu_{u_\lambda}|_1}{\d \H^{d-1}}  \\
			& \leq \sum_{i=1}^d  \Big( \sum_{z \in \mathcal{G}_i(\lambda)} \lambda^{d-1} \geo\big(u_\lambda(\lambda(z+e_i)), u_\lambda(\lambda z)\big) +   \hspace{-0.8em} \sum_{z \in \mathcal{B}_i(\lambda)} \lambda^{d-1} \geo\big(u_\lambda(\lambda(z+e_i)), u_\lambda(\lambda z)\big) \Big) \\
			& \leq \int_{\Omega'}  \big| \nabla u   \big|_{2,1}   \d x +  C_{\Sigma, d}  \lambda 
		\end{split}
	\end{equation*}
and hence, letting $\lambda \to 0$ and $\Omega' \searrow \Omega$, \eqref{eq:limsup on W11}. Thanks to this step, it suffices to prove the upper bound assuming that the  $\SS^1$-valued map  is constant on each of the cubes $I_\lambda(\lambda z) \subset \Omega^\lambda$. 

\step{5}\ (Construction of $u_N$). Let $u_\lambda \colon \Omega^\lambda \to \SS^1$ be a map that is constant on each of the cubes $I_\lambda(\lambda z)$. We consider the discretization map $\mathfrak{P}_N \colon \SS^1 \to \S_N$ defined as follows: given $a \in \SS^1$, we let $\varphi_a \in [0,2\pi)$ be the unique angle such that $a = \exp(\iota \varphi)$ and we set 
\begin{equation*}
	\mathfrak{P}_N(a) := \exp\big( \iota \theta_N \big\lfloor  \varphi_a/\theta_N \big\rfloor \big) \, .
\end{equation*}
Note that $\geo(\mathfrak{P}_N(a), a) = |\theta_N \big\lfloor  \varphi_a/\theta_N \big\rfloor  - \varphi_a| \leq \theta_N$. We put $u_N := \mathfrak{P}_N(u_\lambda) \in BV(\Omega;\S_N)$.

Then, by the triangle inequality
\begin{equation*}
	\begin{split}
		& \int_{\Omega\cap J_{u_N}}\geo(u_N^-,u_N^+)|\nu_{u_N}|_1\,\mathrm{d}\mathcal{H}^{d-1} \\
		& \quad \leq \sum_{z \in \mathcal{Z}_i(\lambda) } \lambda^{d-1} \geo(u_N(\lambda(z + e_i)),u_N(\lambda z)) \\
		& \quad \leq \int_{\Omega^{\lambda} \cap J_{u_\lambda} }{ \! \geo(u_\lambda^+,u_\lambda^-)|\nu_{u_\lambda}|_1}{\d \H^{d-1}}  \\
		& \quad \quad + \sum_{z \in \mathcal{Z}_i(\lambda) } \lambda^{d-1} \Big( \geo\big(u_N(\lambda(z + e_i)\big),u_\lambda(\lambda (z + e_i))\big) + \geo\big(u_N(\lambda z ),u_\lambda(\lambda  z ) \big) \Big) \\
		& \quad \leq  \int_{\Omega^{\lambda} \cap J_{u_\lambda} }{ \! \geo(u_\lambda^+,u_\lambda^-)|\nu_{u_\lambda}|_1}{\d \H^{d-1}} + \sum_{z \in \mathcal{Z}_i(\lambda) } \lambda^{d-1} 2 \theta_N \\
		& \quad \leq \int_{\Omega^{\lambda} \cap J_{u_\lambda} }{ \! \geo(u_\lambda^+,u_\lambda^-)|\nu_{u_\lambda}|_1}{\d \H^{d-1}} + 2 \theta_N \H^{d-1}({\Omega^\lambda \cap J_{u_\lambda}})\, .
	\end{split}
\end{equation*}
Letting $N \to +\infty$ and by~\eqref{eq:limittheta} we conclude the proof.



  
\end{proof}
\section{Constrained problems} \label{sec:constrained problems}
In this final section we apply the results for the discrete-to-continuum limit to some constrained minimization problem. Again here we can use the more abstract results of \cite{Bra-Cic-Ruf}. We consider the case of discrete Dirichlet boundary conditions and discrete phase constraints. We start with the latter. Note that in both cases we do not state separately the convergence of minimizers which is a standard consequence of the general theory of $\Gamma$-convergence.

\vspace*{0.2cm}
\noindent{\bf Volume constraints in the $N$-clock model:}
Let $V\in (0,1)^N$ be such that $\sum_{k=1}^NV_k=1$. We define a new set of constrained spin configurations by
\begin{equation*}
\mathcal{PC}_{\e}(V):=\left\{u:\e\Z^d\cap \Omega\to\S_N:\,\frac{\#\{u=\exp(ik\theta)\}}{\#(\e\Z^d\cap \Omega)}=V_{k,\e}\quad\forall 1\leq k\leq N\right\}
\end{equation*}
and assume that 
\begin{equation}\label{eq:limitphase}
\lim_{\e\to 0}V_{k,\e}=V_k\quad\quad\forall 1\leq k\leq N.
\end{equation}
Define then the constrained functional
\begin{equation*}
E_{\e,V}^N(u)=\begin{cases}
E_{\e}^N(u) &\mbox{if $u\in\mathcal{PC}_{\e}(V)$} \, , \\
+\infty &\mbox{otherwise in $L^1(\Omega;\R^2)$} \, .
\end{cases}
\end{equation*}
Then by \cite[Theorem 6.2]{Bra-Cic-Ruf} we have the following $\Gamma$-convergence result.
\begin{corollary}\label{c.volconstraint}
Let $N\in\N$ and for $1\leq k\leq N$ let $V_{k,\e}\in (0,1)$ satisfy \eqref{eq:limitphase}. Then as $\e\to 0$ the sequence of functionals $E_{\e,V}^N$ $\Gamma$-converge with respect to the strong $L^1(\Omega;\R^2)$ to the functional $E_{N,V} \colon L^1(\Omega;\R^2)\to [0,+\infty]$ defined by
\begin{equation*}
E_{N,V}(u):=
\begin{cases}
\displaystyle\int_{\Omega\cap J_u}\geo(u^-,u^+)|\nu_u|_1\,\mathrm{d}\mathcal{H}^{d-1} &\mbox{if $u\in BV(\Omega;\S_N)$ and}
\vspace*{-0.3cm}
\\
&\mbox{$|\{u=\exp(ik\theta)\}|=V_k\quad\forall 1\leq k\leq N$,}
\vspace*{0.2cm}
\\
+\infty &\mbox{otherwise.}
\end{cases}
\end{equation*}
\end{corollary}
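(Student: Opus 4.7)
The plan is to bootstrap the unconstrained $\Gamma$-convergence in Theorem~\ref{thm:eps to 0} to the constrained setting, either by quoting the abstract machinery of \cite[Theorem~6.2]{Bra-Cic-Ruf} directly or by reproducing the two essential ingredients it relies on: stability of the volume constraint under $L^1$-convergence (for the liminf) and a surgery procedure that adjusts the discrete volumes of a recovery sequence by introducing only negligible extra surface energy (for the limsup). Note that, consistently with the hypothesis $\sum_k V_k=1$, we normalize $\Omega$ so that $|\Omega|=1$; otherwise the natural constraint in the limit is $|\{u=\exp(\iota k\theta_N)\}|=V_k|\Omega|$.

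For the liminf inequality, assume $u_\e\in\mathcal{PC}_\e(V)$ converges in $L^1(\Omega;\RR^2)$ to some $u$ with $\tfrac{N}{2\pi\e}E_\e^N(u_\e)\le C$. By Theorem~\ref{thm:eps to 0}(i) we have $u\in BV(\Omega;\S_N)$, and the $L^1$-convergence together with the fact that $u_\e$ is piecewise constant on cubes of volume $\e^d$ gives
\begin{equation*}
\big|\{u=\exp(\iota k\theta_N)\}\big|=\lim_{\e\to 0}\e^d\,\#\{u_\e=\exp(\iota k\theta_N)\}=\lim_{\e\to 0}V_{k,\e}=V_k,
\end{equation*}
where we used \eqref{eq:limitphase} and the normalization $|\Omega|=1$. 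Hence $u$ is admissible for $E_{N,V}$, and the inequality $\liminf_{\e\to 0}E_{\e,V}^N(u_\e)\ge E_{N,V}(u)$ follows from Proposition~\ref{p.gamma-liminf}.

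For the limsup inequality, let $u\in BV(\Omega;\S_N)$ satisfy the volume constraint $|\{u=\exp(\iota k\theta_N)\}|=V_k$ for every $k$. Proposition~\ref{p.Gamma-limsup} yields a sequence $\tilde u_\e\colon\e\Z^d\cap\Omega\to\S_N$ with $\tilde u_\e\to u$ in $L^1$ and $\limsup_{\e\to 0}\tfrac{N}{2\pi\e}E_\e^N(\tilde u_\e)\le E_{N,V}(u)$ (up to the factor $\tfrac{4\sin^2(\theta_N/2)}{\theta_N^2}$ appearing in Theorem~\ref{thm:eps to 0}). In general $\tilde u_\e$ will not respect the volume constraint, but the discrepancies
\begin{equation*}
\delta_{k,\e}:=V_{k,\e}-\e^d\,\#\{\tilde u_\e=\exp(\iota k\theta_N)\}
\end{equation*}
satisfy $\delta_{k,\e}\to 0$ by the $L^1$-convergence $\tilde u_\e\to u$ combined with \eqref{eq:limitphase}. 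The idea is then to modify $\tilde u_\e$ on a finite family of disjoint discrete cubes, contained in $\Omega$ and disjoint from a tubular neighborhood of $J_u$, on each of which $\tilde u_\e$ is constant; on such a cube we replace the current value $\exp(\iota k\theta_N)$ by the target value $\exp(\iota k'\theta_N)$, thereby transferring mass from phase $k$ to phase $k'$. Choosing the side length $\ell_\e$ of these cubes so that $\e\ll\ell_\e\ll 1$ and using $|\delta_{k,\e}|$ cubes to correct phase $k$, one balances the discrete volumes exactly while paying, for each modified cube, an additional surface energy of order $\ell_\e^{d-1}$; the total extra energy is therefore $O(|\delta_{k,\e}|/\ell_\e)=o(1)$ provided $\ell_\e$ is chosen to decay slowly enough.

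The main technical obstacle is precisely this surgery: one must show that enough room is available inside $\Omega\setminus\{x:\mathrm{dist}(x,J_u)<\eta\}$ to host the required number of disjoint correcting cubes for every pair $(k,k')$ of phases, uniformly in $\e$. This reduces to a combinatorial counting argument using that $u$, being a $BV$ function taking finitely many values, has each sublevel set of positive measure with density $1$ at $\mathcal L^d$-a.e.\ interior point; hence for every $\eta>0$ the set $\{u=\exp(\iota k\theta_N)\}\setminus B_\eta(J_u)$ contains arbitrarily many disjoint cubes of side $\ell_\e$ once $\e$ is small. Once the surgery is performed, a diagonal argument as $\eta\to 0$ produces the desired recovery sequence $u_\e\in\mathcal{PC}_\e(V)$, concluding the proof.
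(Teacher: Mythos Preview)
Your proposal is correct and matches the paper's approach: the paper's entire proof of this corollary is the single sentence ``Then by \cite[Theorem~6.2]{Bra-Cic-Ruf} we have the following $\Gamma$-convergence result,'' and you explicitly offer that citation as the primary route while additionally sketching the liminf/limsup surgery argument that underlies it. Your parenthetical remark about the missing prefactor $\tfrac{4\sin^2(\theta_N/2)}{\theta_N^2}$ is also well taken---the statement of the corollary as written in the paper omits it (and the scaling $\tfrac{N}{2\pi\e}$), which appears to be a slip.
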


\noindent {\bf Dirichlet Boundary conditions:}
In order to define discrete Dirichlet boundary conditions and to derive a convergence result, we need to assume some-well preparedness of the boundary condition. For the sake of simplicity we assume that $u_0\in BV_{\rm loc}(\R^d,\S_N)$ is a polyhedral partition such that
\begin{equation}\label{eq:nojumpatbdry}
\mathcal{H}^{d-1}(\Omega\cap J_{u_0})=0.
\end{equation}
We define the set of configurations satisfying a discrete Dirichlet boundary condition $u=u_0$ by
\begin{equation*}
\mathcal{PC}_{\e,u_0}=\left\{u:\e\Z^d\cap\Omega\to\S_N:\,u(\e i)=u_0(\e i)\text{ if dist}(\e i,\partial\Omega)\leq 2\e\right\}.
\end{equation*}
As for the case of volume constraints we define the constrained functionals
\begin{equation*}
E_{\e,u_0}^N(u):=\begin{cases}
E_{\e}^N(u) &\mbox{if $u\in\mathcal{PC}_{\e,u_0}$} \, ,
\\
+\infty &\mbox{otherwise in $L^1(\Omega;\R^2)$} \, .
\end{cases}
\end{equation*}
Since the $\Gamma$-limit result for the sequence $E_{\e}^N$ remains unchanged for any set $\Omega'\supset\supset\Omega$ we can apply \cite[Theorem 4.1 \& Remark 4.2 (i)]{Bra-Cic-Ruf} to obtain the following corollary.
\begin{corollary}
Let $u_0\in BV_{\rm loc}(\R^d;\S_N)$ be a polyhedral partition satisfying \eqref{eq:nojumpatbdry}. Then as $\e\to 0$ the sequence of functionals $E_{\e,V}^N$ $\Gamma$-converge with respect to the strong $L^1(\Omega;\R^2)$ to the functional $E_{N,u_0} \colon L^1(\Omega;\R^2)\to [0,+\infty]$ defined by
\begin{equation*}
E_{N,u_0}(u):=
\begin{cases}
\displaystyle\int_{\Omega\cap J_u}\geo(u^-,u^+)|\nu_u|_1\,\mathrm{d}\mathcal{H}^{d-1}+\int_{\partial\Omega}\geo(u^-,u_0^+)|\nu_x|_1\,\mathrm{d}\mathcal{H}^{d-1} &\mbox{if $u\in BV(\Omega;\S_N)$} \, ,
\\
+\infty &\mbox{otherwise,}
\end{cases}
\end{equation*}
where $\nu_x$ denotes the unit outer normal vector at $\mathcal{H}^{d-1}$-a.e. $x\in\partial\Omega$.
\end{corollary}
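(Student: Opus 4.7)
My plan is to reduce the Dirichlet-constrained $\Gamma$-convergence to the unconstrained one of Theorem~\ref{thm:eps to 0} via the standard extension argument formalized in~\cite[Theorem~4.1 \& Remark~4.2(i)]{Bra-Cic-Ruf}. I would fix an open, bounded, Lipschitz domain $\Omega'$ with $\Omega\subset\subset\Omega'$ and, to any $u\in\mathcal{PC}_{\e,u_0}$, associate the extension $\tilde u\colon\e\Z^d\cap\Omega'\to\S_N$ defined by $\tilde u=u$ on $\e\Z^d\cap\Omega$ and $\tilde u=u_0$ on $\e\Z^d\cap(\Omega'\setminus\Omega)$. Because the discrete Dirichlet condition pins $u$ to $u_0$ in the $2\e$-collar of $\partial\Omega$, no new interactions are created across $\partial\Omega$, and one has the identity $E_\e^N(\tilde u,\Omega')=E_{\e,u_0}^N(u)+E_\e^N(u_0,\Omega'\setminus\ol\Omega)$, where the last term is bounded uniformly in $\e$ (and indeed converges as $\e\to 0$) thanks to the polyhedral structure of $u_0$.

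For compactness and the liminf inequality I would apply Theorem~\ref{thm:eps to 0}(i)--(ii) on $\Omega'$ to the extended sequence $\tilde u_\e$: up to a subsequence $\tilde u_\e\to\tilde u$ in $L^1(\Omega';\R^2)$ with $\tilde u\in BV(\Omega';\S_N)$ and $\tilde u\equiv u_0$ on $\Omega'\setminus\ol\Omega$, and the restriction $u:=\tilde u|_\Omega$ is the sought limit in $L^1(\Omega;\R^2)$. Thanks to~\eqref{eq:nojumpatbdry}, the jump set of $\tilde u$ decomposes modulo $\H^{d-1}$-null sets as the disjoint union of $J_u\cap\Omega$, $\partial\Omega$, and $J_{u_0}\setminus\ol\Omega$, with inner trace $u^-$ and outer trace $u_0^+$ along $\partial\Omega$. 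Splitting the liminf inequality of Theorem~\ref{thm:eps to 0}(ii) along this decomposition and subtracting the contribution of $J_{u_0}\setminus\ol\Omega$ (which is independent of $u$ and is saturated in the limit by the frozen sequence $u_0$ itself) delivers the bulk term $\int_{\Omega\cap J_u}\geo(u^-,u^+)|\nu_u|_1\d\H^{d-1}$ together with the boundary term $\int_{\partial\Omega}\geo(u^-,u_0^+)|\nu_x|_1\d\H^{d-1}$.

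For the limsup, given $u\in BV(\Omega;\S_N)$ I would extend it to $\tilde u\in BV(\Omega';\S_N)$ by $\tilde u=u_0$ on $\Omega'\setminus\Omega$ and apply Theorem~\ref{thm:eps to 0}(iii) on $\Omega'$ to obtain a recovery sequence $\tilde u_\e\to\tilde u$ saturating the bulk bound on $\Omega'$. The main obstacle is that $\tilde u_\e$ need not respect the discrete Dirichlet condition in the $2\e$-collar of $\partial\Omega$, so it must be modified there to yield a competitor in $\mathcal{PC}_{\e,u_0}$. This is precisely the content of the $2\e$-boundary-layer prescription already embedded in the cell formula~\eqref{eq:cellformula} and made abstract in~\cite[Remark~4.2(i)]{Bra-Cic-Ruf}: the cell-problem minimizers used to construct the recovery sequence already lie in $\mathcal{PC}_{\e,u_0}$ at the cellular level, so a Vitali-type covering of $\partial\Omega$ by small cubes aligned with $\nu_x$ and the corresponding gluing produce an admissible sequence whose energy differs from that of $\tilde u_\e$ by $o(1)$. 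Restricting the modified sequence to $\Omega$ then provides the recovery sequence achieving $E_{N,u_0}(u)$ in the limsup.
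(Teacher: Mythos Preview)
Your overall strategy coincides with the paper's: both reduce the Dirichlet problem to the unconstrained $\Gamma$-limit on an enlarged domain $\Omega'\supset\supset\Omega$ and invoke \cite[Theorem~4.1 \& Remark~4.2(i)]{Bra-Cic-Ruf}. In fact the paper's proof is nothing more than that one-line citation, so your elaboration of the extension and gluing mechanism is extra detail rather than a different route.

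There is, however, a genuine gap in your liminf elaboration. You assert that the additive term $\tfrac{N}{2\pi\e}E_\e^N(u_0,\Omega'\setminus\ol\Omega)$ coming from the frozen extension is ``saturated in the limit by the frozen sequence $u_0$ itself'', i.e.\ that its $\limsup$ equals the $\Gamma$-limit energy of $u_0$ on $\Omega'\setminus\ol\Omega$. This is false whenever $u_0$ jumps across some facet by an angle $k\theta_N$ with $k\geq 2$: the naive lattice sampling of $u_0$ pays $4\sin^2(k\theta_N/2)$ per crossed bond, whereas the optimal layered transition underlying Proposition~\ref{p.Gamma-limsup} pays only $4k\sin^2(\theta_N/2)$, and Lemma~\ref{l.sinuslemma} shows the former is \emph{strictly larger}. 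Consequently the term you subtract overshoots $E(u_0,\Omega'\setminus\ol\Omega)$, and the resulting lower bound falls strictly below $E_{N,u_0}(u)$. The easy repair is to let $\Omega'\searrow\ol\Omega$ after taking $\e\to 0$: since $u_0$ is polyhedral, both the target contribution $E(u_0,\Omega'\setminus\ol\Omega)$ and the overshooting discrete term tend to zero, and the defect vanishes. Alternatively, and more in the spirit of the paper, simply invoke \cite[Theorem~4.1]{Bra-Cic-Ruf} for the full constrained $\Gamma$-limit rather than attempting the subtraction by hand.
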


\noindent {\bf Acknowledgments.}  The work of M.\ Cicalese was supported by the DFG Collaborative Research Center TRR 109, ``Discretization in Geometry and Dynamics''. G.\ Orlando has received funding from the European Union’s Horizon 2020 research and innovation programme under the Marie Sk\l odowska-Curie grant agreement No 792583.
\bigskip

\bibliographystyle{plain}

\begin{thebibliography}{}	\frenchspacing

	\bibitem{AliBraCic06}
	{\sc R. Alicandro, A. Braides, M. Cicalese}. Phase and anti-phase boundaries in binary discrete systems: a variational viewpoint. {\em Netw. Heterog. Media} {\bf 1} (2006), 85--107.

	\bibitem{AliCicRuf15}
	{\sc R. Alicandro, M. Cicalese, M. Ruf}. Domain formation in magnetic polymer composites: an approach via stochastic homogenization. {\em Arch. Ration. Mech. Anal.} {\bf 218} (2015), 945--984.

	\bibitem{AliCicSig12}
	{\sc R. Alicandro, M. Cicalese, L. Sigalotti} Phase transitions in presence of surfactants: from discrete to continuum. {\em Interfaces Free Bound.} {\bf 14} (2012), 65--103.

	\bibitem{ACL} {\sc R.~Alicandro, A.~Corbo Esposito, C.~Leone.}
	\newblock Relaxation in $BV$ of integral functionals defined on Sobolev functions with values in the unit sphere.
	\newblock {\em J. Convex Anal.} {\bf 14} (2007), 69--98.
	
	\bibitem{AliDLGarPon} {\sc R. Alicandro, L. De Luca, A. Garroni, M. Ponsiglione.} Metastability and dynamics of discrete topological singularities in two dimensions: a $\Gamma$-convergence approach. {\em Arch. Ration. Mech. Anal.} {\bf 214} (2014), 269--330.
	
	\bibitem{AliPon} {\sc R. Alicandro, M. Ponsiglione.} Ginzburg-Landau functionals and renormalized energy: a revised $\Gamma$-convergence approach. {\em J. Funct. Anal.} {\bf 266} (2014), 4890--4907.
	
	
	\bibitem{AmbFusPal} {\sc L. Ambrosio, N. Fusco, D. Pallara}. {\em Functions of Bounded Variation and Free Discontinuity Problems}, Clarendon Press Oxford, 2000.
	
	\bibitem{Ber} {\sc V.L. Berezinskii.} Destruction of long range order in one-dimensional and two
	dimensional systems having a continuous symmetry group. I. Classical systems. {\em
	Sov. Phys. JETP} {\bf 32} (1971), 493--500.
	
	\bibitem{Bet}
	{\sc F.~Bethuel}. The approximation problem for Sobolev maps between two manifolds. {\em Acta Math.} {\bf 167} (1991), 153--206. 
	
	\bibitem{BetBreHel} {\sc F. Bethuel, H. Brezis, F. H\'elein.} {\em Ginzburg-Landau vortices.} Progress in Nonlinear Differential Equations and their Applications, 13. Birkh\"auser Boston MA, 1994.
	
	\bibitem{BraCic17}
	{\sc A. Braides, M. Cicalese}. Interfaces, modulated phases and textures in lattice systems. {\em Arch. Ration. Mech. Anal.} {\bf 223} (2017), 977--1017.
	
	\bibitem{Bra-Cic-Ruf} {\sc A. Braides, M. Cicalese, M. Ruf}. Continuum limit and stochastic homogenization of discrete ferromagnetic thin films. {\em Anal. PDE} {\bf 11} (2018), 499-553.
	
	\bibitem{BraKre18}
	{\sc A. Braides, L. Kreutz}. Design of lattice surface energies. {\em Calc. Var. Partial Differential Equations} {\bf 57}:97 (2018).

	\bibitem{BraPia20}
	{\sc A. Braides, A. Piatnitski}. Homogenization of ferromagnetic energies on Poisson random sets in the plane. {\em Preprint} (2020).

	\bibitem{BraPia12}
	{\sc A. Braides, A. Piatnitski}. Variational problems with percolation: dilute spin systems at zero temperature. {\em Journal of Statistical Physics} {\bf 149} (2012), 846--864.
	
	\bibitem{CafDLL05}
	{\sc L.A. Caffarelli, R. de la Llave}. Interfaces of ground states in Ising models with periodic coefficients. {\em J. Stat. Phys.} {\bf 118} (2005), 687--719.

	\bibitem{CanSeg} {\sc  G. Canevari, A. Segatti.} Defects in nematic shells: a $\Gamma$-convergence discrete-to-continuum approach. {\em Arch. Ration. Mech. Anal.} {\bf 229} (2018), 125--186.
	
	\bibitem{CicForOrl19}
	{\sc M. Cicalese, M. Forster, G. Orlando}. Variational analysis of a two-dimensional frustrated spin system: emergence and rigidity of chirality transitions. {\em SIAM J. Math. Anal.} {\bf 51} (2019),  4848--4893.

	\bibitem{CicSol15}
	{\sc M. Cicalese, F. Solombrino}. Frustrated ferromagnetic spin chains: a variational approach to chirality transitions. {\em J. Nonlinear Sci.} {\bf 25} (2015), 291--313.
		
	\bibitem{COR1} {\sc M.~Cicalese, G.~Orlando, M.~Ruf}. Emergence of concentration effects in the variational analysis of the $N$-clock model. {\em Preprint} (2020).
	
	\bibitem{COR2} {\sc M.~Cicalese, G.~Orlando, M.~Ruf}. The $N$-clock model: Variational analysis for fast and slow divergence rates of $N$. {\em In preparation.}

	\bibitem{vEKO}
	{\sc A. C. D. van Enter, C. K\"ulske, A. A. Opoku}. Discrete approximations to vector spin models. 
	{\em J. Phys. A} {\bf 44}:47 (2011).

	\bibitem{FroSpe} {\sc J. Fr\"ohlich, T. Spencer.} The Kosterlitz-Thouless transition in two-dimensional abelian spin systems and the Coulomb gas. {\em Comm. Math. Phys.} {\bf 81} (1981), 527--602.
	
	\bibitem{Kos} {\sc J.M. Kosterlitz.} The critical properties of the two-dimensional xy model. {\em J. Phys. C} {\bf 6} (1973), 1046--1060.
	
	\bibitem{KosTho} {\sc J.M. Kosterlitz, D.J. Thouless.} Ordering, metastability and phase transitions in two-dimensional systems. {\em J. Phys. C} {\bf 6} (1973), 1181--1203.
	
	\bibitem{KO} {\sc C. K\"ulske, A. A. Opoku} Continuous spin mean-field models: limiting kernels and Gibbs properties of local transforms. 
	{\em J. Math. Phys.} {\bf 49} (2008). 
	
	\bibitem{Lic} {\sc M. W. Licht}. Smoothed projections over weakly Lipschitz domains. {\em Math. Comp.} {\bf 88} (2019), 179--210.
	
	
	\bibitem{Luu-Vae} {\sc J. Luukkainen and J. V\"ais\"al\"a}. Elements of Lipschitz topology. {\em Ann. Acad. Sci. Fenn. Ser. A I Math.} {\bf 3} (1977), 85--122.
	
	
	\bibitem{SanSer-book} {\sc E. Sandier, S. Serfaty.} {\em Vortices in the Magnetic Ginzburg-Landau Model.} Progress in Nonlinear Differential Equations and their Applications, 70. Birkh\"auser Boston, Inc., Boston, MA, 2007.
	
	\end{thebibliography}

\end{document}